\documentclass[aps,prl,reprint,superscriptaddress,amsmath,amssymb,longbibliography,floatfix]{revtex4-2}
\usepackage[T1]{fontenc}
\usepackage[utf8]{inputenc}
\usepackage{lmodern}
\usepackage{amsthm}
\usepackage{etoolbox}
\usepackage{microtype}
\usepackage{graphicx}
\usepackage{xcolor}
\usepackage{tikz}
\usetikzlibrary{arrows.meta,decorations.pathreplacing,decorations.markings}
\usepackage[colorlinks=true,linkcolor=blue,citecolor=blue,urlcolor=blue,hypertexnames=false]{hyperref}
\hypersetup{pdftitle={Square-root growth of operator entanglement in an integrable brickwork circuit},
 pdfauthor={Balazs Pozsgay},
 pdfsubject={PRL submission with Supplemental Material}}
\newcommand{\ket}[1]{|#1\rangle}
\newcommand{\bra}[1]{\langle#1|}
\newcommand{\Ftwo}{\mathbb F_2}
\newcommand{\id}{\mathbf 1}
\newcommand{\op}{\mathrm{op}}
\newcommand{\dw}{\mathrm{dw}}
\newcommand{\Bin}{\operatorname{Bin}}
\newcommand{\OSR}{\operatorname{OSR}}

\newcommand{\E}{\mathbb E}
\newcommand{\Prob}{\mathbb P}
\newcommand{\HS}{\mathrm{HS}}
\newcommand{\rank}{\operatorname{rank}}
\newcommand{\head}[1]{\textit{#1.---}}

\newtheorem*{unnumberedlemma}{Lemma}
\makeatletter
\let\arxiv@affiliation\affiliation
\def\arxiv@part{main}
\patchcmd{\NAT@bibsetnum}{\ref{LastBibItem}}%
  {\ref{\arxiv@part-LastBibItem}}{}{\errmessage{Cannot patch bibliography width}}
\patchcmd{\endrtx@thebibliography}{\label{LastBibItem}}%
  {\label{\arxiv@part-LastBibItem}}{}{\errmessage{Cannot patch bibliography label}}
\let\endthebibliography\endrtx@thebibliography
\makeatother
\begin{document}
\title{Square-root growth of operator entanglement in an integrable brickwork circuit}
\author{Bal\'azs Pozsgay}
\affiliation{MTA-ELTE ``Momentum'' Integrable Quantum Dynamics Research Group,\protect\\
ELTE E\"otv\"os Lor\'and University, Budapest, Hungary}
\date{\today}
\begin{abstract}
It is widely expected that the von Neumann entanglement of a local operator
grows at most logarithmically in integrable many-body systems in infinite volume. We give a counterexample in a
four-state brickwork circuit whose gate is a permutation matrix solving
the constant Yang--Baxter equation. The von Neumann operator entropy grows as
$(\log2)\sqrt{t/\pi}+O(\log t)$, whereas fixed-index R\'enyi entropies grow
linearly below index one and logarithmically above it.
At fixed relative Hilbert--Schmidt error below one, matrix product operator
simulations require a bond dimension growing at least as $\exp(c\sqrt t)$
for some $c>0$.
\end{abstract}
\maketitle

\head{Introduction}
Operator entanglement measures the complexity developed by a local operator
under Heisenberg evolution. Integrable models are special many-body systems
because they possess an extensive family of conserved quantities that
constrain the dynamics. While the real-space entanglement of pure quantum
states follows the same overall scaling laws in integrable and chaotic
systems, various examples have shown that the entanglement of local operators
behaves very differently. In particular, several examples were found where operator entanglement in integrable systems
grows only logarithmically or it even remains bounded
\cite{main-ProsenPizorn2007,main-PizornProsen2009,main-Dubail2017,main-KlobasMedenjakProsenVanicat2019,main-AlbaDubailMedenjak2019,main-Medenjak2022,main-MuthUnanyanFleischhauer2011,main-MurcianoDubailCalabrese2024,main-Wang2025}.
These results motivated the broader expectation that the von Neumann
entanglement of a local operator grows at most logarithmically under
integrable dynamics in infinite volume~\cite{main-ProsenZnidaric2007,main-BertiniKosProsenI,main-Alba2021,main-JacobyGopalakrishnan2026,main-Dowling2026}.

Against this background, we ask whether integrability itself enforces the
expected logarithmic bound. The answer is no. We construct a brickwork
circuit with local dimension four and a one-site operator whose von Neumann
entropy grows as $(\log2)\sqrt{t/\pi}+O(\log t)$. The gate is a unitary solution of the braid equation (we
say it is a Yang--Baxter gate) and it is also involutive. 
It follows that the model is Yang--Baxter integrable.
The circuit also supports an exponentially large family of ballistic gliders. Thus neither structure
by itself ensures logarithmic operator-entanglement growth.

Existing examples of slow growth relied on additional mechanisms, including
free-fermion structure \cite{main-ProsenPizorn2007,main-PizornProsen2009}, exceptionally
simple scattering \cite{main-AlbaDubailMedenjak2019}, or a single-file rule for
internal degrees of freedom \cite{main-Medenjak2022}. A recent study of
several Yang--Baxter-gate families identified further sufficient conditions
for bounded or logarithmic operator entanglement \cite{main-sajat-opent}.
As a special case it was proven that involutive and dual-unitary~\cite{main-BertiniClaeysProsen2026} Yang--Baxter gates always lead to logarithmic growth.
The present gate evades these conditions: it is involutive but not dual-unitary.

We will also treat a simpler Schr\"odinger-picture problem, a quench from
a product-state domain wall.
This simpler computation displays all the technical steps needed for the operator
entanglement. 

The counterexample was found by ChatGPT 6 Astra, building on work with the
earlier model 5.6 Sol. All computations were checked by the author. The
manuscript was prepared in collaboration between the human author and the AI.

\head{Circuit and operator entanglement}
Consider an infinite chain with two-site gate $R$ and one Floquet period $U_F=U_{\rm odd}U_{\rm even}$, where
\begin{equation}
 U_{\rm even}=\prod_j R_{2j,2j+1},\quad
 U_{\rm odd}=\prod_j R_{2j-1,2j}.
 \label{main-eq:floquet}
\end{equation}
Time $t$ counts complete periods. Throughout, we take the system size
$L\to\infty$ at fixed $t$ before taking the long-time limit $t\to\infty$.
We place a spatial cut between sites 0 and 1.
Figure~\ref{main-fig:circuit-gate} shows two periods in braid notation.

For a one-site operator $O$ at site $1$, its Heisenberg time evolution is given by
$O(t)=U_F^tOU_F^{-t}$.\footnote{This definition of Heisenberg time evolution differs from the standard convention, but we adopt it to maintain a full mathematical parallel between the state and operator problems.}
We define its operator space entanglement as follows \cite{main-Zanardi2001}.

We first normalize the operator in the Hilbert--Schmidt norm and then
write its operator-Schmidt decomposition as
\begin{equation}
 \frac{O(t)}{\|O(t)\|_{\HS}}
 =\sum_j\sqrt{p_j(t)}\,A_j(t)\otimes B_j(t),
 \label{main-eq:schmidt}
\end{equation}
where $\operatorname{tr}(A_i^\dagger A_j)
=\operatorname{tr}(B_i^\dagger B_j)=\delta_{ij}$ and $\sum_jp_j=1$.
The number of nonzero terms is the exact bond dimension needed across this
cut in a matrix-product-operator representation. For $0<\alpha\ne1$,
the R\'enyi entropies and their von Neumann limit are defined as
\begin{equation}
 S_\alpha^\op=\frac{\log\sum_jp_j^\alpha}{1-\alpha},\qquad
 S_1^\op=-\sum_jp_j\log p_j.
 \label{main-eq:entropies}
\end{equation}
We use natural logarithms.

\head{The two-site gate}
The local space is $H_{\rm sec}\otimes H_{\rm col}$, with
$H_{\rm sec}\simeq H_{\rm col}\simeq\mathbb C^2$.
Write a basis state as $\ket{s,a}$. Here we call $s$ the {\it sector bit} and we will use the notation $s=A,B$.
The second bit $a\in\Ftwo$ is interpreted as a color. The dynamics for the sector and color bits will be markedly different.

Our gate is a permutation matrix, and it is obtained as the linear extension of a classical map $r: X^2\to X^2$ where
$X=\Ftwo\times \Ftwo$, such that
$R\ket{s,a;\tau,b}=\ket{r(s,a;\tau,b)}$.
We define the classical map as
\begin{equation}
r(s,a;\tau,b)
 =(\tau,a+(s+\tau)b;s,b),
 \label{main-eq:gate}
\end{equation}
where binary arithmetic is understood. If $s=\tau$ then the map is equal to the identity. If $s\ne \tau$ then the sector
labels are exchanged, the color $b$ is 
kept invariant, and the color $a$ suffers a shear. If we interpret the gate as describing diagonal scattering of particles with
two labels (see the right-hand panel of Fig.~\ref{main-fig:circuit-gate}), then we can observe that the sector labels $s$ and $\tau$ propagate ballistically. The invariance of the color $b$ is then
interpreted as a perfect reflection.

For unequal sectors, write the input colors as $(a,b)$ and the output
colors as $(h(a,b),b)$, where $h(a,b)=a+b$. This color rule has
one-sided non-degeneracy: knowing $a$ and $h(a,b)$ fixes $b$, whereas
knowing the two right-hand colors, both equal to $b$, does not fix $a$.
This permits local reconstruction analogous to that of dual-unitary
gates~\cite{main-GomborPozsgay2022}, but reconstruction in the opposite
diagonal direction fails. We will use this property in the Supplemental
Material.

\begin{figure}[!tb]
\centering
\begin{tikzpicture}[x=1cm,y=1cm,line cap=round,line join=round,
 font=\footnotesize,
 wire/.style={draw=black!78,line width=.6pt},
 vertex/.style={circle,draw=blue!65!black,fill=white,
   minimum size=1.8mm,inner sep=0pt,line width=.65pt}]
 \begin{scope}[x=.516cm,y=.3752724cm,xshift=-.25cm]
  \foreach \x/\xp in {0/1,1/0,2/3,3/2,4/5,5/4,6/7,7/6}
    \draw[wire] (\x,.35)--(\xp,1.25);
  \foreach \x in {.5,2.5,4.5,6.5}
    \node[vertex] at (\x,.80) {};
  \foreach \x in {0,7}\draw[wire] (\x,1.25)--(\x,2.15);
  \foreach \x/\xp in {1/2,2/1,3/4,4/3,5/6,6/5}
    \draw[wire] (\x,1.25)--(\xp,2.15);
  \foreach \x in {1.5,3.5,5.5}
    \node[vertex] at (\x,1.70) {};
  \foreach \x/\xp in {0/1,1/0,2/3,3/2,4/5,5/4,6/7,7/6}
    \draw[wire] (\x,2.15)--(\xp,3.05);
  \foreach \x in {.5,2.5,4.5,6.5}
    \node[vertex] at (\x,2.60) {};
  \foreach \x in {0,7}\draw[wire] (\x,3.05)--(\x,3.95);
  \foreach \x/\xp in {1/2,2/1,3/4,4/3,5/6,6/5}
    \draw[wire] (\x,3.05)--(\xp,3.95);
  \foreach \x in {1.5,3.5,5.5}
    \node[vertex] at (\x,3.50) {};
  \foreach \x in {0,...,7}\draw[wire] (\x,3.95)--(\x,4.45);
 \end{scope}
 \begin{scope}[xshift=.3cm]
 \draw[wire] (5.45,.375)--(6.35,1.125);
 \draw[wire] (6.35,.375)--(5.45,1.125);
 \node[vertex] at (5.9,.75) {};
 \node[anchor=north,inner sep=1pt] at (5.45,.295) {$(s,a)$};
 \node[anchor=north,inner sep=1pt] at (6.35,.295) {$(\tau,b)$};
 \node[anchor=south east,inner sep=1pt] at (5.75,1.205) {$(\tau,a+(s+\tau)b)$};
 \node[anchor=south,inner sep=1pt] at (6.35,1.205) {$(s,b)$};
 \end{scope}
\end{tikzpicture}
\caption{Braid representation of the circuit and its local gate.
Left: two Floquet periods, with even and odd layers alternating from
bottom to top. Right: input and output sector and color variables for
the map~\eqref{main-eq:gate}. Hollow vertices denote $R$; time runs upwards.}
\label{main-fig:circuit-gate}
\end{figure}

The gate is real, unitary, involutive,
and it satisfies the braid relation:
\begin{equation}
 R_{12}R_{23}R_{12}
 =R_{23}R_{12}R_{23}.
 \label{main-eq:braid}
\end{equation}
The gate is related to the twisted unions and gluing constructions studied
previously~\cite{main-MajidMarkl1996,main-GatevaIvanovaMajid2008,main-Nichita2013,main-GomborPozsgay2022}.

There are two important consequences of the braid relation and
involutivity. First, these relations define a unitary representation
of the symmetric group $S_n$ on the $n$-site Hilbert space, with each adjacent
transposition represented by $R_{j,j+1}$. We will use this natural
representation repeatedly, and for a permutation $p\in S_n$ we will denote the corresponding unitary operator as $U_p$. 

Second, involutivity permits the Baxterization
$\check{\mathcal R}(u)=(\id+iuR)/(1+iu)$, leading to a spectral-parameter-dependent solution of the Yang--Baxter
equation. 
The transfer-matrix construction of Ref.~\cite{main-sajat-opent}
provides commuting extensive charges with local densities, as in the
usual Yang--Baxter framework \cite{main-VanicatZadnikProsen2018}.
The braid relation also generates nontrivial ballistic gliders and an
exponentially large conserved family \cite{main-GomborPozsgay2022,main-sajat-opent};
in this sense the circuit is superintegrable.

Each gate preserves the numbers of $A$ and $B$ sector labels, giving a
$U(1)$ symmetry. We denote the number of $A$ labels by $N_A$.

The gate is not dual-unitary because it acts as the identity when the
sector labels coincide.

\head{A domain-wall quench}
To expose the entanglement mechanism we consider a simpler problem: real-time evolution of a selected pure state. This
state will be constructed as a domain-wall state such that away from the interface it is invariant under the circuit
dynamics.

Define $\ket{+}_{\rm sec}=(\ket{A}+\ket{B})/\sqrt{2}$ in sector space, and $\ket{+}_{\rm col}=(\ket{0}+\ket{1})/\sqrt2$ in color space.
Furthermore, define
\begin{equation}
 \ket{\psi_{\mathrm L}}=\ket{+}_{\rm sec}\otimes \ket{0}_{\rm col}\qquad
 \ket{\psi_{\mathrm R}}=\ket{+}_{\rm sec}\otimes \ket{+}_{\rm col}
 \label{main-eq:reservoirs}
\end{equation}
The initial state and its Schr\"odinger evolution are
\begin{equation}
 \ket{\Psi_0}=\bigotimes_{x\le0}\ket{\psi_{\mathrm L}}_x\otimes
 \bigotimes_{x\ge1}\ket{\psi_{\mathrm R}}_x,\qquad
 \ket{\Psi_t}=U_F^t\ket{\Psi_0}.
 \label{main-eq:quench}
\end{equation}
The state is unentangled across every spatial cut at $t=0$.
The domain wall is entirely in the color degree of freedom, since the
left and right states have the same sector component.
The circuit entangles the sector and color parts of the wave function,
so their initial factorization is lost.

Let us first check how the gate acts within either reservoir. We write
$\ket{s,\chi}=\ket{s}_{\rm sec}\otimes\ket{\chi}_{\rm col}$.
For $s,\tau\in\{A,B\}$ and $\chi=0,+$, Eq.~\eqref{main-eq:gate} gives
$R\ket{s,\chi;\tau,\chi}=\ket{\tau,\chi;s,\chi}$.
The shear leaves two zero colors unchanged, while its permutation action
preserves the uniform color superposition. By linearity, this SWAP action
also holds for arbitrary sector superpositions: crossings reorder the
sector states without changing the colors. Taking every sector state to
be $\ket{+}_{\rm sec}$, we obtain the reservoir invariance:
\begin{equation}
 R\ket{\psi_{\mathrm L},\psi_{\mathrm L}}=\ket{\psi_{\mathrm L},\psi_{\mathrm L}},\qquad
 R\ket{\psi_{\mathrm R},\psi_{\mathrm R}}=\ket{\psi_{\mathrm R},\psi_{\mathrm R}}.
 \label{main-eq:bulk-fixed}
\end{equation}

For each $t\ge1$, we now introduce an auxiliary circuit that generates
the same entanglement across the interface in this quench.
Let $w_t\in S_{2t}$ be the permutation that swaps two packets of length $t$.
We define $W_t=U_{w_t}$; see the left-hand panel of Fig.~\ref{main-fig:rectangle} for $t=3$.
The circuit consists of the braiding steps that arise when we exchange two packets of length $t$.

We define an evolved auxiliary state by
\begin{equation}
 \ket{\Phi_t}=W_t\bigl(\ket{\psi_{\mathrm L}}^{\otimes t}\otimes \ket{\psi_{\mathrm R}}^{\otimes t}\bigr).
 \label{main-eq:dw-rectangle}
\end{equation}
Its Schmidt spectrum equals that of
$\ket{\Psi_t}$ across $0|1$.
We prove this in two steps. We first remove gates outside the forward light cone of the initial
interface using~\eqref{main-eq:bulk-fixed}.
We then disregard gates whose future light cones lie entirely on one side
of the cut at time $t$, since they contribute only separate left and right
unitaries and preserve the Schmidt spectrum. The remaining network is $W_t$.
This is the state counterpart of the
reduction in Ref.~\cite{main-sajat-opent}.

\begin{figure}[!tb]
\centering
\begin{tikzpicture}[x=0.55cm,y=0.40cm,line cap=round,line join=round,
 font=\scriptsize,
 statewire/.style={draw=black!78,line width=.6pt},
 foldedwire/.style={draw=blue!65!black,line width=.75pt},
 statevertex/.style={circle,draw=blue!65!black,fill=white,
   minimum size=1.65mm,inner sep=0pt,line width=.65pt},
 foldedvertex/.style={circle,draw=blue!65!black,fill=blue!65!black,
   minimum size=1.65mm,inner sep=0pt},
 stateinput/.style={rectangle,draw=black!65,fill=black!4,
   minimum size=2.8mm,inner sep=.4pt},
 identity/.style={rectangle,draw=blue!65!black,fill=white,
   minimum size=3.4mm,inner sep=0pt,line width=.75pt},
 source/.style={rectangle,draw=orange!80!black,fill=orange!15,
   rounded corners=.7pt,minimum width=4.8mm,minimum height=3.2mm,
   inner sep=.2pt,line width=.75pt}]

 \begin{scope}
  \node[font=\small,anchor=west] at (-.45,6.45) {(a)};
  \foreach \x in {0,...,5}\draw[statewire](\x,-.35)--(\x,.35);
  \foreach \x in {0,1,4,5}\draw[statewire](\x,.35)--(\x,1.25);
  \foreach \x/\xp in {2/3,3/2}\draw[statewire](\x,.35)--(\xp,1.25);
  \node[statevertex] at (2.5,.80){};
  \foreach \x in {0,5}\draw[statewire](\x,1.25)--(\x,2.15);
  \foreach \x/\xp in {1/2,2/1,3/4,4/3}\draw[statewire](\x,1.25)--(\xp,2.15);
  \foreach \x in {1.5,3.5}\node[statevertex]at(\x,1.70){};
  \foreach \x/\xp in {0/1,1/0,2/3,3/2,4/5,5/4}\draw[statewire](\x,2.15)--(\xp,3.05);
  \foreach \x in {.5,2.5,4.5}\node[statevertex]at(\x,2.60){};
  \foreach \x in {0,5}\draw[statewire](\x,3.05)--(\x,3.95);
  \foreach \x/\xp in {1/2,2/1,3/4,4/3}\draw[statewire](\x,3.05)--(\xp,3.95);
  \foreach \x in {1.5,3.5}\node[statevertex]at(\x,3.50){};
  \foreach \x in {0,1,4,5}\draw[statewire](\x,3.95)--(\x,4.85);
  \foreach \x/\xp in {2/3,3/2}\draw[statewire](\x,3.95)--(\xp,4.85);
  \node[statevertex]at(2.5,4.40){};
  \foreach \x in {0,...,5}\draw[statewire](\x,4.85)--(\x,5.30);
  \foreach \x in {0,1,2}\node[stateinput]at(\x,-.35){$\psi_{\mathrm L}$};
  \foreach \x in {3,4,5}\node[stateinput]at(\x,-.35){$\psi_{\mathrm R}$};
  \node at(1,5.75){$\mathrm R_{\rm out}$};
  \node at(4,5.75){$\mathrm L_{\rm out}$};
  \draw[dashed,black!45](2.5,4.95)--(2.5,6.05);
 \end{scope}

 \begin{scope}[shift={(7.3,0)}]
  \node[font=\small,anchor=west] at (-.45,6.45) {(b)};
  \foreach \x in {0,...,5}\draw[foldedwire](\x,-.35)--(\x,.35);
  \foreach \x in {0,1,4,5}\draw[foldedwire](\x,.35)--(\x,1.25);
  \foreach \x/\xp in {2/3,3/2}\draw[foldedwire](\x,.35)--(\xp,1.25);
  \node[foldedvertex] at (2.5,.80){};
  \foreach \x in {0,5}\draw[foldedwire](\x,1.25)--(\x,2.15);
  \foreach \x/\xp in {1/2,2/1,3/4,4/3}\draw[foldedwire](\x,1.25)--(\xp,2.15);
  \foreach \x in {1.5,3.5}\node[foldedvertex]at(\x,1.70){};
  \foreach \x/\xp in {0/1,1/0,2/3,3/2,4/5,5/4}\draw[foldedwire](\x,2.15)--(\xp,3.05);
  \foreach \x in {.5,2.5,4.5}\node[foldedvertex]at(\x,2.60){};
  \foreach \x in {0,5}\draw[foldedwire](\x,3.05)--(\x,3.95);
  \foreach \x/\xp in {1/2,2/1,3/4,4/3}\draw[foldedwire](\x,3.05)--(\xp,3.95);
  \foreach \x in {1.5,3.5}\node[foldedvertex]at(\x,3.50){};
  \foreach \x in {0,1,4,5}\draw[foldedwire](\x,3.95)--(\x,4.85);
  \foreach \x/\xp in {2/3,3/2}\draw[foldedwire](\x,3.95)--(\xp,4.85);
  \node[foldedvertex]at(2.5,4.40){};
  \foreach \x in {0,...,5}\draw[foldedwire](\x,4.85)--(\x,5.30);
  \foreach \x in {0,1,2,4,5}\node[identity]at(\x,-.35){$\iota$};
  \node[source]at(3,-.35){$x_O$};
  \node at(1,5.75){$\mathrm R_{\rm out}$};
  \node at(4,5.75){$\mathrm L_{\rm out}$};
  \draw[dashed,black!45](2.5,4.95)--(2.5,6.05);
 \end{scope}
\end{tikzpicture}
\caption{Entangling rectangles for $t=3$.
(a) The state problem, with incoming packets prepared in
$\psi_{\mathrm L}$ and $\psi_{\mathrm R}$.
(b) The folded operator problem, with identity inputs $\iota$ and the local
source $x_O$. Hollow and filled vertices denote $R$ and $\widehat R$,
respectively. In both panels the dashed line marks the outgoing packet cut.}
\label{main-fig:rectangle}
\end{figure}

\head{The entanglement} We resolve the incoming state into sectors with
fixed numbers of $A$ labels in each packet.
Let $\Pi_{t,k}$ be the projector on $t$ sites to the subspace with exactly $k$ $A$ labels. Then we expand the state as
\begin{equation}
 \ket{\Phi_t}=\sum_{k,\ell}\sqrt{\pi^{\dw}_{k\ell}}\ket{\phi_{k\ell}},
 \qquad \pi^{\dw}_{k\ell}=4^{-t}\binom tk\binom t\ell,
 \label{main-eq:dw-branches}
\end{equation}
where $\ket{\phi_{k\ell}}$ is a normalized vector defined via
\begin{equation}
  \sqrt{\pi^{\dw}_{k\ell}}\ket{\phi_{k\ell}}=W_t   \bigl[
  \bigl(\Pi_{t,\ell}\ket{\psi_{\mathrm L}}^{\otimes t}\bigr)
  \otimes \bigl(\Pi_{t,k}  \ket{\psi_{\mathrm R}}^{\otimes t} \bigr)\bigr]
 \label{main-eq:count-transport}
\end{equation}
The deterministic transport of the sector labels through $W_t$ implies that $\ket{\phi_{k\ell}}$ contains $k$ and $\ell$
$A$ labels on the left and right, respectively.
Therefore, formula \eqref{main-eq:count-transport} is a symmetry-resolved expression of the final state, but it is not yet a Schmidt
decomposition. Our main strategy is to compute the 
entanglement for each $\ket{\phi_{k\ell}}$, and then to use the double sum to give lower and upper bounds for
the entanglement in $\ket{\Phi_t}$. 

We expand the initial product states on the left and on the right according to their sector labels, while keeping the color
part of the wave function fixed.
Direct expansion gives the states
\begin{equation}
\ket{\Phi_t((s_j)|(\tau_j))}\equiv W_t\left[  \bigl(\otimes_{j} \ket{s_j,0}\bigr)\otimes
\bigl(    \otimes_{i} \ket{\tau_i,+}\bigr)\right]
\end{equation}
We call the tuples $(s_j)=(s_1,\ldots,s_t)$ and $(\tau_j)=(\tau_1,\ldots,\tau_t)$ {\it words}.

Each $\ket{\phi_{k\ell}}$ is a normalized coherent sum over all pairs of
input words with $N_A=\ell$ on the left and $N_A=k$ on the right. The key
result is that \emph{this coherent sum has exactly the same Schmidt
spectrum as any single normalized term}. This reduction is proved in
Sec.~\ref{supp-subsec:word-reordering} of the
Supplemental Material (SM)~\cite{main-SupplementalMaterial}.

We now choose a representative that makes the imbalance in sector-label
counts between the two packets explicit. Let $m=|\ell-k|$
be the absolute difference between the numbers of $A$ labels in the two
words. For $\ell\ge k$, we choose $\ket{\Phi_t(u A^m|B^m u)}$, where
$u=A^kB^{t-\ell}$ is a common word of length $t-m$.

This common word can be deleted without changing the Schmidt spectrum:
the states $\ket{\Phi_t(u A^m|B^m u)}$ and $\ket{\Phi_m(A^m|B^m)}$
have the same Schmidt spectrum. This reduction is proved in
Sec.~\ref{supp-subsec:common-word-removal} of the SM.
We stress that when $u$ is nonempty,
the second state lives on a shorter
segment of $2m<2t$ sites and is generated by the smaller circuit $W_m$.
We call this reduced state the \emph{residual core}.

We now show that the residual cores
$\ket{\Phi_m(A^m|B^m)}$ and $\ket{\Phi_m(B^m|A^m)}$ are maximally
entangled in color space.
For $m=0$ the core is empty and the statement is immediate. For $m\ge1$,
consider first the vector
\begin{equation}
\ket{\Phi_m(A^m|B^m)}=W_m\left[  \ket{A,0}^{\otimes m}\otimes
   \ket{B,+}^{\otimes m}\right]  
\end{equation}
Now the sector labels are completely fixed: they are uniquely selected in the input state and they propagate
ballistically. The dynamics can thus be reduced to the evolution of the color part of the wave function. The left part
of the input state is a computational basis state in color space, but the right state involves linear combinations. We expand the
right input into $2^m$ basis states, and then evaluate the braiding steps for the individual pair of color words
$0^m$ and $b=(b_1,\ldots,b_m)$, with $b_j\in \Ftwo$.
For such inputs the computation is completely classical. The output of the braiding circuit $W_m$ can then be
described by two functions $\beta,\gamma: (\Ftwo)^m\to (\Ftwo)^m$, such that
\begin{equation}
  \begin{split}
  W_m\left[
\ket{0^m}_A\otimes \ket{b}_B
    \right]=
   \ket{\gamma(b)}_B\otimes \ket{\beta(b)}_A
  \end{split}
  \label{main-eq:state-color-maps}
 \end{equation}
Both $\beta$ and $\gamma$ are bijections, as proved in Sec.~\ref{supp-subsec:color-reconstruction} of the SM~\cite{main-SupplementalMaterial}. Expanding the right
input gives
\begin{equation}
  \label{main-eq:bell-core}
  \ket{\Phi_m(A^m|B^m)}=
\frac{1}{2^{m/2}}
\sum_{b_j\in \Ftwo}  \ket{\gamma(b)}_B\otimes \ket{\beta(b)}_A
\end{equation}
Bijectivity makes the output vectors orthonormal on each side, so the
sum over $b$ in Eq.~\eqref{main-eq:bell-core} is already a flat Schmidt
decomposition with $2^m$ equal coefficients. This proves maximal
entanglement. Exchanging the sector names gives the same conclusion
for the core $B^m|A^m$.

\head{Entanglement estimation}
We now control the coherent sum in Eq.~\eqref{main-eq:dw-branches}.
Distinct count values define orthogonal local sectors, but branches
sharing one count need not have orthogonal Schmidt vectors on that side.
Standard entropy bounds for convex combinations of density
matrices~\cite{main-NielsenChuang2010} give
\begin{equation}
 \begin{gathered}
 \overline S\le S_1^{\dw}(\Phi_t)\le\overline S+H,\\
 \overline S=\sum_{k,\ell}\pi^{\dw}_{k\ell}S_1(\phi_{k\ell}).
 \end{gathered}
 \label{main-eq:coherent-count-bound}
\end{equation}
Here $S_1^{\dw}$ is the bipartite von Neumann entropy and $\overline S$ is its
average over the normalized fixed-count branches, weighted by their
probabilities. The quantity $H$ is the Shannon entropy of the joint
distribution of the sector-label counts. The general statement and proof
are given in Sec.~\ref{supp-sec:coherent-count} of the SM~\cite{main-SupplementalMaterial}.

Each branch has entropy $S_1(\phi_{k\ell})=|k-\ell|\log2$.
Averaging $|k-\ell|$ with the joint weights $\pi^{\dw}_{k\ell}$ in
Eq.~\eqref{main-eq:dw-branches}, we obtain
\begin{equation}
 \overline S=\frac{\log2}{\sqrt\pi}\sqrt t+O(t^{-1/2}).
 \label{main-eq:state-mean}
\end{equation}
This binomial average is evaluated in Sec.~\ref{supp-subsec:binomial-vn} of the SM. Since $k,\ell\in\{0,\ldots,t\}$, their joint
distribution has at most $(t+1)^2$ outcomes. Its Shannon entropy therefore
satisfies $H\le\log (t+1)^2=2\log(t+1)$. Equation~\eqref{main-eq:coherent-count-bound}
therefore gives
\begin{equation}
 S_1^{\dw}(t)=\frac{\log2}{\sqrt\pi}\sqrt t+O(\log t).
 \label{main-eq:state-vn}
\end{equation}

\head{From states to operators} We now turn to operator entanglement.
The calculation closely parallels the domain-wall computation above. Only
a few modifications are required along the way, and we focus on these
differences.

We first use the folding trick to recast Heisenberg evolution as
Schr\"odinger evolution on a doubled Hilbert space~\cite{main-sajat-opent}.
We define vectorization by
$f(\ket{x}\bra{y})=\ket{x}_{\rm k}\ket{y}_{\rm b}$ and associate
the normalized vector $x_O=f(O)/\|O\|_{\HS}$ with any nonzero one-site
operator $O$.
Here $x,y$ label the full local basis, so the doubled local space has
dimension $16$. Vectorization is a local Hilbert--Schmidt isometry and
preserves the spatial bipartition and Schmidt coefficients. We need the
normalized vectorization of the identity:
\begin{equation}
 \iota=f(\id/2)=\frac12\sum_{s=A,B}\sum_{a\in\Ftwo}
 \ket{s,a}_{\rm k}\ket{s,a}_{\rm b}.
 \label{main-eq:folded-identities}
\end{equation}
The folded gate $\widehat R$ is defined by
$\widehat R f_2(X)=f_2(R^\dagger XR)$ with $f_2=f\otimes f$~\cite{main-sajat-opent}.
Since $R$ is real and Hermitian, it applies
the same $R$ independently to ket and bra. Unitarity gives the folded
counterpart of~\eqref{main-eq:bulk-fixed}:
\begin{equation}
 \widehat R(\iota\otimes\iota)=\iota\otimes\iota.
 \label{main-eq:folded-fixed}
\end{equation}
Let $\widehat W_t$ be obtained from $W_t$ by replacing each crossing $R$
with $\widehat R$.

For the local operator $O$ acting at site~$1$, the causal reduction
using~\eqref{main-eq:folded-fixed} leads to the auxiliary state~\cite{main-sajat-opent}
\begin{equation}
 \ket{\Omega_t}=\widehat W_t
 \left[\iota^{\otimes t}\otimes
 \left(x_O\otimes\iota^{\otimes(t-1)}\right)\right],
 \label{main-eq:op-rectangle}
\end{equation}
whose Schmidt spectrum across the outgoing packet cut equals the normalized
operator-Schmidt spectrum of $O(t)$ across $0|1$.
This has almost the same formal structure as the domain-wall problem in
Eq.~\eqref{main-eq:dw-rectangle}.

We now choose the sector-changing matrix unit
$O_\times=\ket{B,0}\bra{A,0}$.
The Hermitian, traceless source
$O_{\rm H}=(O_\times+O_\times^\dagger)/\sqrt2$ has exactly the same
normalized operator-Schmidt spectrum as $O_\times$ at every time, as proved
in Sec.~\ref{supp-subsec:hermitian-source} of the SM.
We therefore perform the calculation using $O_\times$.

In the folded circuit, each strand carries a ket--bra pair of sector
labels, which propagate ballistically. Identity inputs have equal sector
labels, $(A,A)$ or $(B,B)$, and include a sum over all colors, with the ket
and bra colors equal in each term. The source
has sectors $(B,A)$ and colors $(0,0)$. Thus the two layers start from
the same color data but evolve under different sector words.

For a fixed sector branch, let $k$ and $\ell$ be the numbers of $(A,A)$
pairs in the packet containing the source and in the other packet,
respectively, and put $d=\ell-k$.
In Sec.~\ref{supp-subsec:folded-reduction} of the SM
we show that the earlier reduction steps also apply to the folded circuit.
Removing the longest possible common word built from
the labels $(A,A)$ and $(B,B)$ leaves the residual cores
\begin{equation}
 \begin{cases}
 (A,A)^d\mid(B,A)(B,B)^{d-1},&d\ge1,\\
 (B,B)^{1-d}\mid(B,A)(A,A)^{-d},&d\le0.
 \end{cases}
 \label{main-eq:op-cores}
\end{equation}

For the $d\ge1$ core, let
$\iota_s=2^{-1/2}\sum_{a\in\Ftwo}\ket{s,a}_{\rm k}\ket{s,a}_{\rm b}$
be the normalized identity input in sector $s=A,B$.
Having identified the residual core, we need to evaluate
\begin{equation}
 \widehat W_d
 \left[\iota_A^{\otimes d}\otimes
 \left(x_{O_\times}\otimes\iota_B^{\otimes(d-1)}\right)\right].
 \label{main-eq:op-core-state}
\end{equation}
Expanding the identity inputs gives a sum over $d$ input colors on the
left and $d-1$ on the right, with both source colors fixed to zero.
We show in Sec.~\ref{supp-subsec:folded-reconstruction} of the SM~\cite{main-SupplementalMaterial}
that this state has a flat Schmidt spectrum with Schmidt rank $2^{d-1}$.
For $d\le0$, exchanging ket and bra gives a flat spectrum with rank
$2^{-d}$. Thus the entropy of each normalized branch is
$|\ell-k|\log2$ for $\ell\le k$ and $(|\ell-k|-1)\log2$ for $\ell>k$.
We absorb this bounded correction into the remainder below, since it
does not alter the asymptotic growth laws.

We now restore the coherent sum. As in the state computation, deterministic
transport places the $(k,\ell)$ branches in orthogonal local output count
sectors, so the entropy bound in Eq.~\eqref{main-eq:coherent-count-bound}
applies to $\ket{\Omega_t}$.
Equation~\eqref{main-eq:folded-identities} gives independent unbiased binomial
counts $K$ and $L$, with $t-1$ and $t$ trials, respectively.
Averaging the branch entropies gives
\begin{equation}
 \overline S=(\log2)\E|L-K|+O(1)
 =\frac{\log2}{\sqrt\pi}\sqrt t+O(1).
 \label{main-eq:op-mean}
\end{equation}
The binomial average is evaluated in Sec.~\ref{supp-subsec:binomial-vn} of the SM~\cite{main-SupplementalMaterial}.
The Shannon entropy $H$ of the joint count distribution obeys
$H\le\log[t(t+1)]$, since there are $t(t+1)$ possible count pairs.
Combining these estimates with Eq.~\eqref{main-eq:coherent-count-bound} yields
\begin{equation}
 S_1^\op(t)=\frac{\log2}{\sqrt\pi}\sqrt t+O(\log t).
 \label{main-eq:op-vn}
\end{equation}
These bounds establish superlogarithmic growth of the operator entanglement.

\head{R\'enyi entropies}
We also derive bounds on the R\'enyi entropies and find markedly different behavior
for fixed $\alpha>0$, $\alpha\ne1$:
\begin{equation}
 S_\alpha^\op(t)=
 \begin{cases}
 v_\alpha t+O(\log t),&0<\alpha<1,\\
 \Theta(\log t),&1<\alpha,
 \end{cases}
 \label{main-eq:hierarchy}
\end{equation}
where $\Theta(\log t)$ means that the entropy is bounded above and below
by positive multiples of $\log t$ for sufficiently large $t$.
The growth rate $v_\alpha$ is derived
in Sec.~\ref{supp-sec:renyi} of the SM~\cite{main-SupplementalMaterial}.
The earlier domain-wall problem obeys the same growth laws.

The linear growth for $0<\alpha<1$ reflects the enhanced contribution of
exponentially many small Schmidt probabilities, associated with rare
imbalances of order $t$. The von Neumann entropy probes the entanglement
generated by typical imbalances of order $\sqrt t$. For $\alpha>1$, the
larger Schmidt probabilities are weighted more strongly, and their scale
is set by rare, nearly balanced branches with $|d|=O(1)$.

\head{MPO approximation}
The entropy scaling laws above do not by themselves determine the bond
dimension needed for tensor network simulations at fixed approximation
accuracy~\cite{main-SchuchWolfVerstraeteCirac2008}.
This depends on the sum of the largest Schmidt probabilities.
We ask how the bond dimension required to approximate $O(t)$ by a matrix
product operator (MPO) grows at fixed accuracy. Let $\chi_\varepsilon(t)$
be the least bond dimension across the cut $0|1$ needed for a relative
squared Hilbert--Schmidt error at most $\varepsilon$, with
$0<\varepsilon<1$ fixed. There exists $c_\varepsilon>0$ such that
$\chi_\varepsilon(t)\ge\exp(c_\varepsilon\sqrt t)$ for all sufficiently
large $t$.
This bound obstructs efficient MPO simulation: MPOs with polynomially growing
bond dimension capture an asymptotically vanishing fraction of the operator's
\mbox{Hilbert--Schmidt} weight.
The proof is given in Sec.~\ref{supp-sec:mpo} of the SM~\cite{main-SupplementalMaterial}.

\head{Discussion}
Our main result is that
Yang--Baxter integrability does not enforce logarithmic growth of the
operator entanglement, or polynomial bond dimension at fixed 
Hilbert--Schmidt accuracy.
We presented one counterexample with a special mechanism for entanglement production. The
circuit fits into the standard Yang--Baxter framework.

The distinguished properties were made possible by choosing four-dimensional local spaces
and a special local permutation action on the computational basis states.
The gate propagates the sector labels
ballistically, and it can generate maximal entanglement in the color variables
if the incoming sector labels differ. At the same time, the gate acts
identically on the color variables if the incoming sector labels are identical.
This also means that the gate has invariant subspaces, and this is a special
type of decomposability \cite{main-MajidMarkl1996,main-GatevaIvanovaMajid2008,main-Nichita2013,main-GomborPozsgay2022}.

The gate is not invariant under spatial reflection. However, we conjecture that this is not a limitation. In
Eq.~\eqref{supp-eq:reflection-gate} of the SM we introduce a
reflection-invariant $D=8$ circuit. We conjecture that it has the same
entanglement-growth scaling forms.

The results leave us with the following open questions: What is the minimal local dimension such that the square-root law can be
realized? Is the square-root law the largest possible von Neumann entropy production for an integrable model? How
often does the square-root law appear among integrable models? Can we exclude the square-root law by a proper
indecomposability condition? We hope to return to these questions.


\begin{acknowledgments}
\head{Acknowledgments}
We are grateful to Istv\'an Vona for useful discussions and collaboration
during the early stages of this project.
\end{acknowledgments}

\paragraph*{Data availability.} No new numerical data are reported in this work.

%

\makeatletter
\close@column
\clearpage
\onecolumngrid
\setcounter{page}{1}
\@booleanfalse\twocolumn@sw
\def\title@column#1{\minipagefootnote@init#1\minipagefootnote@foot}
\def\close@column{\newpage}
\def\arxiv@part{supp}
\frontmatter@init
\let\author\frontmatter@author
\let\affiliation\arxiv@affiliation
\let\and\frontmatter@and
\let\maketitle\frontmatter@maketitle
\let\arxiv@label\label
\def\label#1{\def\arxiv@key{#1}\def\arxiv@first{FirstPage}%
  \ifx\arxiv@key\arxiv@first\arxiv@label{supp-FirstPage}%
  \else\arxiv@label{#1}\fi}
\makeatother
\setcounter{section}{0}
\setcounter{subsection}{0}
\setcounter{subsubsection}{0}
\setcounter{equation}{0}
\setcounter{figure}{0}
\setcounter{table}{0}
\setcounter{footnote}{0}
\setcounter{secnumdepth}{3}
\renewcommand{\theequation}{S\arabic{equation}}
\renewcommand{\thefigure}{S\arabic{figure}}
\renewcommand{\thetable}{S\arabic{table}}
\renewcommand{\theHequation}{supp.\arabic{equation}}
\renewcommand{\theHfigure}{supp.\arabic{figure}}
\renewcommand{\theHtable}{supp.\arabic{table}}
\renewcommand{\theHsection}{supp.\arabic{section}}
\renewcommand{\theHsubsection}{supp.\arabic{section}.\arabic{subsection}}

\title{Supplemental Material:\texorpdfstring{\\}{ }
Square-root growth of operator entanglement in an integrable brickwork circuit}
\author{Bal\'azs Pozsgay}
\affiliation{MTA-ELTE ``Momentum'' Integrable Quantum Dynamics Research Group,\protect\\
ELTE E\"otv\"os Lor\'and University, Budapest, Hungary}
\date{\today}
\maketitle

\section{Schmidt spectra in the domain-wall problem}
\label{supp-sec:state-spectrum}

In this section we provide details supporting the claim that the
normalized fixed-count states $\ket{\phi_{k\ell}}$ introduced in
Eq.~\eqref{main-eq:dw-branches} of the main text have bipartite
entanglement $S_1(\phi_{k\ell})=|k-\ell|\log2$.

\subsection{Word reordering in fixed-count branches}
\label{supp-subsec:word-reordering}

We start by expanding $\ket{\phi_{k\ell}}$ in states with fixed incoming
sector labels.
Recall that the sector labels propagate ballistically: the entangling
rectangle $W_t$ exchanges the two incoming packets while preserving the
order of the labels within each packet. As in the main text, we call an
ordered sequence of sector labels, such as $(s_j)=(s_1,\ldots,s_t)\in\{A,B\}^t$, a
\emph{word}, and write $N_A((s_j))$ for its number of $A$ letters.
Expanding Eq.~\eqref{main-eq:count-transport} of the main text gives
\begin{equation}
  \ket{\phi_{k\ell}}
  =\frac{1}{\sqrt{\binom{t}{\ell}\binom{t}{k}}}
  \sum_{\substack{(s_j):\,N_A((s_j))=\ell\\
                  (\tau_j):\,N_A((\tau_j))=k}}
  \ket{\Phi_t((s_j)|(\tau_j))}.
  \label{supp-eq:fixed-count-word-sum}
\end{equation}
Here $\ket{\Phi_t((s_j)|(\tau_j))}$ is the output of $W_t$ for
the prescribed sector words $(s_j)$ and $(\tau_j)$ on the left and
right, respectively. The input color state is $\ket{0}$ at every left
site and $\ket{+}$ at every right site:
\begin{equation}
  \ket{\Phi_t((s_j)|(\tau_j))}
  =W_t\left[
  \left(\bigotimes_{j=1}^t\ket{s_j,0}\right)\otimes
  \left(\bigotimes_{j=1}^t\ket{\tau_j,+}\right)
  \right].
  \label{supp-eq:fixed-word-state}
\end{equation}
The input is not a computational basis state, since each $\ket{+}$ is a
superposition in color space.

The sum in Eq.~\eqref{supp-eq:fixed-count-word-sum} contains
$\binom{t}{\ell}\binom{t}{k}$ mutually orthogonal
terms, which explains the normalization. We now show that the coherent
sum in Eq.~\eqref{supp-eq:fixed-count-word-sum} has the same nonzero Schmidt
spectrum as any one of its fixed-word terms.

\begin{unnumberedlemma}[Fixed-word reordering]
Let $(s_j),(\tau_j)$ and
$(\tilde s_j),(\tilde\tau_j)$ be two pairs of words with the same
counts $(\ell,k)$. Then
$\ket{\Phi_t((s_j)|(\tau_j))}$ and
$\ket{\Phi_t((\tilde s_j)|(\tilde\tau_j))}$ have the same Schmidt
spectrum.
\end{unnumberedlemma}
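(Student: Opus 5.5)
The plan is to relate the two fixed-word states by unitaries acting separately on the two output halves. Such unitaries do not change the Schmidt spectrum across the outgoing packet cut. The tool is the unitary representation $p\mapsto U_p$ of $S_{2t}$ generated by the adjacent crossings $R_{j,j+1}$, which exists because $R$ is involutive and satisfies the braid relation~\eqref{main-eq:braid}. Two words of length $t$ with the same number of $A$ letters differ by a permutation of their positions. So there are $p\in S_t$ acting on the incoming left packet (sites $1,\dots,t$) and $q\in S_t$ acting on the incoming right packet (sites $t+1,\dots,2t$) that carry $(s_j)$ to $(\tilde s_j)$ and $(\tau_j)$ to $(\tilde\tau_j)$.

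\emph{Step 1: permutations act on the packet inputs by relabeling the sectors only.} I will check that within each packet the crossing acts as a plain SWAP on the relevant product states. In the left packet every color is $0$. By Eq.~\eqref{main-eq:gate}, the shear $a+(s+\tau)b$ with $b=0$ leaves $a=0$ unchanged, so $R\ket{s,0;\tau,0}=\ket{\tau,0;s,0}$. In the right packet every color is $\ket{+}_{\rm col}$, and the main text already notes that $R\ket{s,+;\tau,+}=\ket{\tau,+;s,+}$: for unequal sectors the color map $(a,b)\mapsto(a+b,b)$ is a bijection of $\Ftwo^2$ and so fixes the uniform superposition. Writing $p$ and $q$ as products of adjacent transpositions inside each packet, I get
\begin{equation*}
U_{p\oplus q}\Bigl[\bigotimes_j\ket{s_j,0}\otimes\bigotimes_j\ket{\tau_j,+}\Bigr]
=\bigotimes_j\ket{\tilde s_j,0}\otimes\bigotimes_j\ket{\tilde\tau_j,+},
\end{equation*}
with no phases, since $R$ is a permutation matrix and the states involved are sums of basis states.

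\emph{Step 2: commute through the rectangle.} Since $U$ is a representation, $W_tU_{p\oplus q}=U_{w_tpw_t^{-1}\oplus\ldots}\,W_t$. Explicitly, $U_{w_t}U_{p\oplus q}=U_{w_t(p\oplus q)w_t^{-1}}U_{w_t}$. The permutation $w_t(p\oplus q)w_t^{-1}$ is the block permutation $q'\oplus p'$: the shifted copy of $q$ acts on output sites $1,\dots,t$ and the shifted copy of $p$ on output sites $t+1,\dots,2t$. Hence $U_{w_t(p\oplus q)w_t^{-1}}=U_{q'}\otimes U_{p'}$ is a product of unitaries supported on the two sides of the outgoing cut, and therefore
\begin{equation*}
\ket{\Phi_t((\tilde s_j)|(\tilde\tau_j))}=(U_{q'}\otimes U_{p'})\ket{\Phi_t((s_j)|(\tau_j))}.
\end{equation*}
Local unitaries preserve the Schmidt coefficients, which proves the lemma.

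The main point to verify carefully is Step 2. It requires that $w_t(p\oplus q)w_t^{-1}$ really factorizes into two blocks, which is a statement in $S_{2t}$ alone. It also requires that the representation property holds for arbitrary reduced words, which is exactly what the braid relation plus involutivity guarantee. Step 1 is where the special input colors matter: for mixed colors, such as a $0$ next to a $+$, the crossing would shear the colors and the argument would fail. That is why the reordering is confined within packets.
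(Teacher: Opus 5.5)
Your proposal is correct and follows the paper's own proof. The SWAP action of $R$ on the uniform-color inputs within each packet lets $U_p\otimes U_q$ reorder the sector words without touching the colors, and the relation $W_t(U_p\otimes U_q)=(U_q\otimes U_p)W_t$ then turns this reordering into unitaries acting separately on the two output packets; your conjugation $w_t(p\oplus q)w_t^{-1}=q'\oplus p'$ in $S_{2t}$ is simply a more explicit statement of the paper's ``repeated use of the braid relation'' step.
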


\begin{proof}
There are permutations $p,q\in S_t$ such that
$(\tilde s_j)=p(s_j)$ and
$(\tilde\tau_j)=q(\tau_j)$. The permutations need not be unique.
The SWAP action established in the main text implies that the action of the unitary operators $U_p$ and $U_q$ on the
input vectors realizes
the permutations $p$ and $q$ on the words, without changing the color part of the wave function.
Repeated use of the braid relation moves the two input operations
through the rectangle and exchanges their packet positions:
\begin{equation}
  W_t (U_p\otimes U_q)=(U_q\otimes U_p) W_t
\end{equation}
This relation shows that permuting the input words is equivalent to
separate unitaries on the left and right outputs. Since these unitaries
act locally with respect to the bipartition, they preserve the Schmidt
spectrum, which proves the claim.
\end{proof}

\begin{unnumberedlemma}[Fixed-count branch reduction]
For fixed $k$ and $\ell$, the state $\ket{\phi_{k\ell}}$ has the same
nonzero Schmidt spectrum as every fixed-word state appearing in
Eq.~\eqref{supp-eq:fixed-count-word-sum}.
\end{unnumberedlemma}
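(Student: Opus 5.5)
The plan is to show that the coherent sum in Eq.~\eqref{supp-eq:fixed-count-word-sum} is obtained from a single reference fixed-word state by applying a \emph{local isometry} on each side of the outgoing cut. Local isometries preserve the nonzero Schmidt spectrum.

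First, fix a reference pair $(s^0_j),(\tau^0_j)$ with counts $(\ell,k)$, and let $\ket{\Phi^0}=\ket{\Phi_t((s^0_j)|(\tau^0_j))}$. For each word $(s_j)$ with $N_A=\ell$, choose one permutation $p\in S_t$ with $p(s^0_j)=(s_j)$. For each $(\tau_j)$ with $N_A=k$, choose one $q$ with $q(\tau^0_j)=(\tau_j)$. Call these choices $P$ and $Q$, so that $|P|=\binom t\ell$ and $|Q|=\binom tk$. By the SWAP action in each reservoir, $U_p$ permutes the sector word while leaving the uniform color data $\ket0^{\otimes t}$ or $\ket+^{\otimes t}$ untouched. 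The intertwining relation $W_t(U_p\otimes U_q)=(U_q\otimes U_p)W_t$ from the fixed-word reordering lemma then gives
\begin{equation}
\ket{\Phi_t((s_j)|(\tau_j))}=(U_q\otimes U_p)\ket{\Phi^0}.
\end{equation}
Summing over all pairs, the coherent sum factorizes as
\begin{equation}
\ket{\phi_{k\ell}}=\frac{1}{\sqrt{|P||Q|}}\Bigl(\sum_{q\in Q}U_q\Bigr)\otimes\Bigl(\sum_{p\in P}U_p\Bigr)\ket{\Phi^0}.
\end{equation}
Here the left output packet carries the incoming right word and the right output packet carries the incoming left word.

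Second, I will use deterministic transport of the sector labels. In $\ket{\Phi^0}$, every left Schmidt vector lies in the subspace $\mathcal H_{w}$ of left-output states with the fixed sector word $w=(\tau^0_j)$. Similarly, every right Schmidt vector lies in $\mathcal H_{v}$ with $v=(s^0_j)$. This holds because the rectangle only reorders sector labels, preserving their order within each packet, and crossings only shear colors. Let $\Pi_w$ be the projector onto $\mathcal H_w$, and define $A=\sum_{q\in Q}U_q\Pi_w$. Since $U_q$ maps $\mathcal H_w$ onto $\mathcal H_{q(w)}$, and the words $q(w)$ are pairwise distinct for $q\in Q$ by construction, these images are mutually orthogonal. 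Therefore $A^\dagger A=\sum_q\Pi_wU_q^\dagger U_q\Pi_w=|Q|\,\Pi_w$, so $A/\sqrt{|Q|}$ is an isometry on $\mathcal H_w$. The same argument shows that $B=\sum_{p\in P}U_p\Pi_v$ satisfies $B^\dagger B=|P|\,\Pi_v$.

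Third, since $\ket{\Phi^0}=(\Pi_w\otimes\Pi_v)\ket{\Phi^0}$, we obtain $\ket{\phi_{k\ell}}=(A/\sqrt{|Q|})\otimes(B/\sqrt{|P|})\ket{\Phi^0}$. This is a product of isometries acting on the support of $\ket{\Phi^0}$, so the reduced density matrices are unitarily conjugated and the nonzero Schmidt spectrum is unchanged. As a consistency check, the normalization agrees with the count of mutually orthogonal terms stated before the lemma. Combined with the fixed-word reordering lemma, the conclusion then holds for every term, not only the reference one.

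The main obstacle is to justify carefully that the sector word on each output side is sharp in every Schmidt vector of the fixed-word state. The input color state is a superposition, so one must check that colors never feed back into the sector labels. The gate~\eqref{main-eq:gate} does guarantee this, since the sector outputs depend only on the sector inputs. I would make this explicit by noting that the sector-word projectors on the input are mapped by $W_t$ to the corresponding output projectors, which commute with the color dynamics. The naive alternative, treating the blocks as a direct sum, fails because the $(p,q)$ grid would seem to multiply the spectrum. The isometry factorization avoids this.
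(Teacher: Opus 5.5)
Your proposal is correct and follows essentially the same route as the paper. You pick one permutation per distinct word, use $W_t(U_p\otimes U_q)=(U_q\otimes U_p)W_t$ to write the branch as $\bigl(\sum_q U_q\bigr)\otimes\bigl(\sum_p U_p\bigr)$ acting on a reference fixed-word state, and then use the orthogonality of distinct output sector-word subspaces to show that the normalized sums, restricted to the reference-word subspace, are isometries. Your explicit computation $A^\dagger A=|Q|\,\Pi_w$ spells out what the paper states in words.
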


\begin{proof}
For each
distinct word with the prescribed count, choose one permutation taking
the reference word to that word. We sum once over each distinct word,
not over all permutations realizing it. On each output packet, restrict
the corresponding unitaries to the subspace with the reference sector
word and arbitrary colors. Different terms have orthogonal sector-word
images, and each term preserves norms. Dividing each sum by the square
root of its number of terms therefore gives an isometry on that
fixed-word subspace. The two sums factorize across the bipartition and
map the reference state to the normalized fixed-count branch. This branch
thus has the same nonzero Schmidt spectrum as the chosen fixed-word
state.
\end{proof}

\subsection{Permutations fixing a sector word}
\label{supp-subsec:word-fixing}

We introduce an auxiliary lemma, which will be used in the proof of the
common-word-removal lemma in Sec.~\ref{supp-subsec:common-word-removal} below.

\begin{unnumberedlemma}[Word fixing]
Let $\ket{\Psi}$ be a vector on $2t$ sites whose sector labels form a
fixed word $v\in\{A,B\}^{2t}$; its color state may be arbitrary. If
$p\in S_{2t}$ fixes $v$, then $U_p\ket{\Psi}=\ket{\Psi}$.
\end{unnumberedlemma}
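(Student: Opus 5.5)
The plan is to reduce to the case where the fixed word is \emph{sorted}. For sorted words the statement follows directly from the gate rule \eqref{main-eq:gate}: when $s=\tau$ the map $r$ is the identity for all colors. Throughout, I use that $p\mapsto U_p$ is a group homomorphism $S_{2t}\to U(\mathcal H^{\otimes 2t})$, which the main text established from the braid relation \eqref{main-eq:braid} and involutivity.

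\emph{Step 1 (sector transport).} Let $\mathcal H_v$ be the subspace of vectors whose sector labels form the word $v$, with arbitrary colors. From \eqref{main-eq:gate}, $R_{j,j+1}$ maps each computational basis state with sector word $v$ to a basis state with sector word $s_j v$, where $s_j$ is the adjacent transposition. This holds for every color configuration: if the two sectors differ they are swapped, and if they are equal the gate is the identity. Hence $R_{j,j+1}\mathcal H_v=\mathcal H_{s_jv}$. Since $U_p$ is a product of adjacent $R$'s along any word for $p$, induction gives $U_p\mathcal H_v=\mathcal H_{p\cdot v}$, where $p\cdot v$ is the induced action of $p$ on words. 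I will fix the left/right convention here so that ``$p$ fixes $v$'' means $p\cdot v=v$ in this sense. Getting this convention consistent is the main bookkeeping hazard, though not a conceptual one.

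\emph{Step 2 (sorted case).} Let $v_0=A^{a}B^{2t-a}$. Its stabilizer is the Young subgroup $S_a\times S_{2t-a}$. This subgroup is generated by the adjacent transpositions $s_j$ with $j\neq a$, each of which exchanges two equal letters of $v_0$. For every such $j$ and every $\ket{\Psi}\in\mathcal H_{v_0}$, the two sites $j,j+1$ carry equal sector labels in every term of the color expansion. So $R_{j,j+1}\ket{\Psi}=\ket{\Psi}$ by the case $s=\tau$ of \eqref{main-eq:gate}, and $R_{j,j+1}$ also preserves $\mathcal H_{v_0}$. Writing any $q$ in the stabilizer as a product of these generators and using the homomorphism property, $U_q$ acts as the identity on $\mathcal H_{v_0}$.

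\emph{Step 3 (conjugation).} For general $v$, choose $\sigma\in S_{2t}$ with $\sigma\cdot v=v_0$. If $p\cdot v=v$ then $q=\sigma p\sigma^{-1}$ fixes $v_0$. Because $U$ is a homomorphism, $U_p=U_\sigma^{-1}U_qU_\sigma$. By Step 1, $U_\sigma\ket{\Psi}\in\mathcal H_{v_0}$. Step 2 gives $U_q U_\sigma\ket{\Psi}=U_\sigma\ket{\Psi}$, and hence $U_p\ket{\Psi}=\ket{\Psi}$.

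The only genuinely needed input is the well-definedness of $U_p$, i.e.\ that it is independent of the reduced decomposition. The paper already supplies this through the braid relation and $R^2=\id$. The step I expect to need the most care is Step 1: the colors change nontrivially under $U_\sigma$ (the shear), and the argument must not rely on them. It must use only that the sector word is transported deterministically and that the $s=\tau$ identity holds for \emph{all} colors.
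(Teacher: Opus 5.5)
Your proposal is correct and follows essentially the same route as the paper. You conjugate to the sorted word $A^rB^{2t-r}$ and note that its stabilizer is the Young subgroup generated by adjacent transpositions inside the equal-sector blocks; this matches the paper's factorization $p'=p'_1p'_2$. Each such generator $R_{j,j+1}$ acts as the identity there, and you conjugate back using the representation property, exactly as the paper does.

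Your Step~1 only makes explicit the sector-transport fact that the paper uses implicitly when it asserts that $U_q\ket{\Psi}$ has sector word $A^rB^{2t-r}$. Your worry about conventions is also harmless: the stabilizer condition $v_{p(i)}=v_i$ for all $i$ is the same for the left and right actions on words.
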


\begin{proof}
Choose a permutation $q$ such that
$qv=A^rB^{2t-r}$, where $r=N_A(v)$. Then $p'=qpq^{-1}$ fixes
$A^rB^{2t-r}$ and can be written as $p'=p'_1p'_2$, where $p'_1$ and
$p'_2$ act on the first $r$ and last $2t-r$ elements, respectively.
Using the symmetric-group representation established in the main text,
we obtain $U_p=U_q^\dagger U_{p'}U_q$. The vector $U_q\ket{\Psi}$ has
fixed sector word $A^rB^{2t-r}$. Within each equal-sector block, every
adjacent generator acts as the identity by
Eq.~\eqref{main-eq:gate} of the main text. Consequently
$U_{p'}=U_{p'_1}U_{p'_2}$ fixes $U_q\ket{\Psi}$, and conjugating back
gives $U_p\ket{\Psi}=\ket{\Psi}$.
\end{proof}

\subsection{Common-word removal}
\label{supp-subsec:common-word-removal}

We now use the reordering freedom of Sec.~\ref{supp-subsec:word-reordering}
to isolate the common sector content of the two input words. Recall
that the left and right words contain $\ell$ and $k$ letters $A$,
respectively, and set $m=|k-\ell|$. We choose the common word
$u=A^{\min(k,\ell)}B^{t-\max(k,\ell)}$ by taking the smaller of the two
input counts for each sector; its length is $t-m$. We write
$\overset{\mathrm{Sch}}{\sim}$ when two states have the same nonzero
Schmidt spectrum across the packet cut.

\begin{unnumberedlemma}[Common-word removal]
The common word $u$ can be removed without changing this spectrum:
\begin{equation}
  \begin{aligned}
  \ket{\Phi_t(uA^m|B^mu)}
  &\overset{\mathrm{Sch}}{\sim}
  \ket{\Phi_m(A^m|B^m)}, && \ell\ge k,\\
  \ket{\Phi_t(uB^m|A^mu)}
  &\overset{\mathrm{Sch}}{\sim}
  \ket{\Phi_m(B^m|A^m)}, && k>\ell.
  \end{aligned}
  \label{supp-eq:common-word-removal}
\end{equation}
\end{unnumberedlemma}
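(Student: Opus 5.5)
The plan is to factor the packet-exchange permutation $w_t$ into block crossings. Using the symmetric-group representation $p\mapsto U_p$, I will write $W_t$ as a product of four smaller block-exchange unitaries. One of them is the core circuit $W_m$. Two act entirely on one side of the outgoing cut. The last one acts trivially on the relevant vector by the word-fixing lemma. I treat the case $\ell\ge k$; the case $k>\ell$ follows by exchanging the names $A$ and $B$.

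\emph{Step 1 (decomposing $w_t$).} Label the four input blocks as $L_1=u$ (positions $1,\dots,t-m$), $L_2=A^m$, $R_1=B^m$ and $R_2=u$. The permutation $w_t$ sends $L_1L_2R_1R_2$ to $R_1R_2L_1L_2$. I realize it by four successive block exchanges:
\begin{itemize}
\item[(i)] exchange $L_2$ with $R_1$, which gives $L_1R_1L_2R_2$;
\item[(ii)] exchange $L_1$ with $R_1$, which gives $R_1L_1L_2R_2$;
\item[(iii)] exchange $L_2$ with $R_2$, which gives $R_1L_1R_2L_2$;
\item[(iv)] exchange $L_1$ with $R_2$, which gives $R_1R_2L_1L_2$.
\end{itemize}
Each block exchange is itself a packet-swap permutation. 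Since $p\mapsto U_p$ is a representation of $S_{2t}$, it follows that
\[
W_t=U_{\rm (iv)}\,U_{\rm (iii)}\,U_{\rm (ii)}\,U_{\rm (i)}.
\]
Here $U_{\rm (i)}$ equals $W_m$ acting on the middle $2m$ sites $t-m+1,\dots,t+m$. It is not necessary to track a specific crossing diagram, because only the composite permutation matters.

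\emph{Step 2 (locating each factor relative to the cut).} The first and second factors are applied as follows.
\begin{itemize}
\item After (i), the state is $\ket{u\text{-block},0}\otimes\ket{\Phi_m(A^m|B^m)}\otimes\ket{u\text{-block},+}$. The outer factors are product states on sites $1,\dots,t-m$ and $t+m+1,\dots,2t$. The core is split by the middle cut at $t\,|\,t+1$ exactly as in $\ket{\Phi_m(A^m|B^m)}$. Hence this state has the same nonzero Schmidt spectrum across the cut as the residual core.
\item Exchange (ii) acts on sites $1,\dots,t$ only, and (iii) acts on sites $t+1,\dots,2t$ only. They are local unitaries with respect to the cut and leave the spectrum unchanged.
\end{itemize}

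\emph{Step 3 (the straddling factor).} Exchange (iv) straddles the cut. It acts on sites $m+1,\dots,2t-m$, which at that moment carry the sector word $uu$. Sector labels are transported deterministically, so after (iii) the vector has the fixed global sector word $B^m\,u\,u\,A^m$. Exchange (iv) swaps the two copies of $u$ and therefore fixes this word. The word-fixing lemma then gives $U_{\rm (iv)}\ket{\cdot}=\ket{\cdot}$. Consequently $\ket{\Phi_t(uA^m|B^mu)}$ equals the Step 2 state up to one-sided unitaries, which proves the first line of Eq.~\eqref{supp-eq:common-word-removal}.

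I expect Step 3 to be the main point needing care. The word-fixing lemma is stated for a vector with a fixed sector word and for a permutation of all $2t$ sites. I will apply it with $p$ equal to the block swap extended by the identity outside the straddling block; this $p$ still fixes the full word. I also need that the vector entering (iv) has a definite sector word even though its colors are entangled, and this follows from the ballistic transport of sector labels. A secondary check is that the composition order in Step 1 is consistent with the convention $U_{pq}=U_pU_q$. If it is not, I will use the mirror-image decomposition instead: first exchange $L_1$ with $R_2$, then proceed in reverse. This keeps the core factor first, since the word-fixing step can equally well be applied to the initial product word $uA^mB^mu$ when the decomposition is reorganized.
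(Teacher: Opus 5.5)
Your proposal is correct and follows the paper's proof essentially step for step: apply the core $W_m$ first, treat the exchanges $uB^m\to B^mu$ and $A^mu\to uA^m$ as one-sided unitaries, and remove the remaining central $u|u$ exchange (the paper's $W_{t-m}$) with the word-fixing lemma. Your extra care about the factorization is sound, and the convention worry does not arise: $w_t$ is an involution and $p\mapsto U_p$ is a representation, so any factorization of $w_t$ into block exchanges gives the same $W_t$.
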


\begin{proof}
We now apply the word-fixing lemma from
Sec.~\ref{supp-subsec:word-fixing}.
Take the initial words $u A^m|B^m u$ and first apply the inner circuit
$W_m$, so that the combined word becomes $u B^m|A^m u$. This circuit
acts only on the unequal central words. The two copies of $u$ remain
untouched product factors on their respective sides of the cut, so the
Schmidt spectrum is that of $\ket{\Phi_m(A^m|B^m)}$.
The full circuit $W_t$ contains further braid operations. Those exchanging
$uB^m\to B^mu$ and $A^mu\to uA^m$ are local with respect to the
bipartition. The remaining circuit $W_{t-m}$ exchanges the two copies
of $u$ in the center. Its permutation $w_{t-m}$ fixes the sector word
$u|u$. By the word-fixing lemma, this remaining circuit acts trivially
on the state. This proves the statement for $\ell\ge k$.
For $\ell<k$, exchanging the sector names leaves the core
$B^m|A^m$ with $m=k-\ell$.
\end{proof}

\subsection{Reconstruction of colors}
\label{supp-subsec:color-reconstruction}

Our goal is to prove that the residual core
$\ket{\Phi_m(A^m|B^m)}$ is maximally entangled in color space across the
two packets of length $m$. It will then have $2^m$ equal Schmidt
coefficients and entanglement entropy
$\log(2^m)=m\log2$.

For $b=(b_1,\ldots,b_m)\in(\Ftwo)^m$, the main text defined the maps
$\beta,\gamma:(\Ftwo)^m\to(\Ftwo)^m$ via
\begin{equation}
  \begin{split}
  W_m\left[
  \ket{0^m}_A\otimes \ket{b}_B
  \right]
  =\ket{\gamma(b)}_B\otimes \ket{\beta(b)}_A.
  \end{split}
  \label{supp-eq:state-color-maps-supp}
\end{equation}
We write
$\gamma(b)=(\gamma_1,\ldots,\gamma_m)$ and
$\beta(b)=(\beta_1,\ldots,\beta_m)$. Expanding the right input over all
color words gives
\begin{equation}
  \ket{\Phi_m(A^m|B^m)}=
  \frac{1}{2^{m/2}}
  \sum_{b_j\in\Ftwo}
  \ket{\gamma(b)}_B\otimes\ket{\beta(b)}_A.
  \label{supp-eq:bell-core-supp}
\end{equation}
If $\beta$ and $\gamma$ are bijections, the two families of output
vectors in Eq.~\eqref{supp-eq:bell-core-supp} are orthonormal. The displayed
sum is then a flat Schmidt decomposition, so proving bijectivity is
enough to establish maximal entanglement. For a fixed basis word $b$,
$W_m$ acts as a permutation on fixed sector and color labels. The color
calculation used to prove this bijectivity is therefore completely
classical.

Figure~\ref{supp-fig:state-color-rectangle}(a) shows the color calculation for
$m=3$ with fixed sector labels. We label the incoming $A$ strands on the
left by $A_1,\ldots,A_m$ and the incoming $B$ strands on the right by
$B_1,\ldots,B_m$, in left-to-right order within each input packet. These
strand labels are written below the input colors; they follow the
strands through the braid, while the colors change at crossings. Each
pair $A_i,B_j$ crosses once, so the pair $(i,j)$ uniquely identifies a
crossing. We evaluate the circuit by following $B_1,\ldots,B_m$ in
order; each $B_j$ crosses $A_m,\ldots,A_1$ in that order and exits on
the left.

Figure~\ref{supp-fig:state-color-rectangle}(b) makes the local rule explicit:
inputs $(A,a;B,c)$ give outputs $(B,h(a,c);A,c)$, with $h(a,c)=a+c$
by Eq.~\eqref{main-eq:gate} of the main text.

The crucial property of the function $h$ is that it is invertible in either argument when the other one
is fixed.
This means that knowing $h(a,c)$ and either
input color determines the other input color uniquely. In
Fig.~\ref{supp-fig:state-color-rectangle}(b), the two left-hand colors $a$
and $h(a,c)$ therefore determine $c$. This local reconstruction resembles
that permitted by dual unitarity, and it is used in the proof below. However, the gate is not dual-unitary, because the 
opposite diagonal reconstruction
fails. The right-hand colors in Fig.~\ref{supp-fig:state-color-rectangle}(b) are both $c$;
therefore, these right-diagonal data do not determine all color variables.

\definecolor{reconfirst}{HTML}{0072B2}
\definecolor{reconsecond}{HTML}{D55E00}
\definecolor{reconthird}{HTML}{009E73}
\newcommand{\colorrectangle}[1]{%
\begin{tikzpicture}[x=1.30cm,y=0.72cm,line cap=round,line join=round,
 font=\small,
 colorwire/.style={draw=black!78,line width=.75pt},
 colorvertex/.style={circle,draw=blue!65!black,fill=white,
   minimum size=2.1mm,inner sep=0pt,line width=.7pt},
 colorlabel/.style={font=\small,inner sep=1pt},
 strandlabel/.style={font=\footnotesize,text=black!70,inner sep=1pt},
 reconstruction/.style={line width=1.75pt,postaction={decorate},
   decoration={markings,mark=at position .54 with
     {\arrow{Stealth[length=2.8mm,width=2.1mm]}}}}]
 \foreach \x in {0,...,5}\draw[colorwire](\x,0)--(\x,.50);
 \foreach \x in {0,1,4,5}\draw[colorwire](\x,.50)--(\x,1.40);
 \foreach \x/\xp in {2/3,3/2}\draw[colorwire](\x,.50)--(\xp,1.40);
 \foreach \x in {0,5}\draw[colorwire](\x,1.40)--(\x,2.30);
 \foreach \x/\xp in {1/2,2/1,3/4,4/3}
   \draw[colorwire](\x,1.40)--(\xp,2.30);
 \foreach \x/\xp in {0/1,1/0,2/3,3/2,4/5,5/4}
   \draw[colorwire](\x,2.30)--(\xp,3.20);
 \foreach \x in {0,5}\draw[colorwire](\x,3.20)--(\x,4.10);
 \foreach \x/\xp in {1/2,2/1,3/4,4/3}
   \draw[colorwire](\x,3.20)--(\xp,4.10);
 \foreach \x in {0,1,4,5}\draw[colorwire](\x,4.10)--(\x,5.00);
 \foreach \x/\xp in {2/3,3/2}\draw[colorwire](\x,4.10)--(\xp,5.00);
 \foreach \x in {0,...,5}\draw[colorwire](\x,5.00)--(\x,5.55);

 #1
 \foreach \x/\y in {2.5/.95,1.5/1.85,3.5/1.85,
   .5/2.75,2.5/2.75,4.5/2.75,1.5/3.65,3.5/3.65,2.5/4.55}
   \node[colorvertex] at (\x,\y){};

 \foreach \x/\i in {0/1,1/2,2/3}{
   \node[colorlabel,anchor=north] at (\x,-.08) {$0$};
   \node[strandlabel,anchor=north] at (\x,-.58) {$A_{\i}$};
 }
 \foreach \x/\i in {3/1,4/2,5/3}{
   \node[colorlabel,anchor=north] at (\x,-.08) {$b_{\i}$};
   \node[strandlabel,anchor=north] at (\x,-.58) {$B_{\i}$};
 }
 \foreach \x/\i in {0/1,1/2,2/3}
   \node[colorlabel,anchor=south] at (\x,5.65) {$\gamma_{\i}$};
 \foreach \x/\i in {3/1,4/2,5/3}
   \node[colorlabel,anchor=south] at (\x,5.65) {$\beta_{\i}$};
 \draw[dashed,black!45](2.5,5.15)--(2.5,6.15);
\end{tikzpicture}
}

\begin{figure}[!tb]
\centering
\begin{minipage}[t]{.60\linewidth}
\centering
\textbf{(a)}\par\smallskip
\colorrectangle{}
\end{minipage}\hfill
\begin{minipage}[t]{.36\linewidth}
\centering
\textbf{(b)}\par\vspace{1.1cm}
\begin{tikzpicture}[x=1cm,y=1cm,line cap=round,line join=round,
 font=\small,
 colorwire/.style={draw=black!78,line width=.75pt},
 colorvertex/.style={circle,draw=blue!65!black,fill=white,
   minimum size=2.1mm,inner sep=0pt,line width=.7pt},
 colorlabel/.style={font=\small,inner sep=1pt},
 sectorlabel/.style={font=\footnotesize,text=black!70,inner sep=1pt}]
 \draw[colorwire] (-1.2,0)--(1.2,1.2);
 \draw[colorwire] (1.2,0)--(-1.2,1.2);
 \node[colorvertex] at (0,.6){};
 \node[colorlabel,anchor=north] at (-1.2,-.08) {$a$};
 \node[colorlabel,anchor=north] at (1.2,-.08) {$c$};
 \node[sectorlabel,anchor=north] at (-1.2,-.48) {$A$};
 \node[sectorlabel,anchor=north] at (1.2,-.48) {$B$};
 \node[colorlabel,anchor=south] at (-1.2,1.28) {$h(a,c)$};
 \node[colorlabel,anchor=south] at (1.2,1.28) {$c$};
 \node[sectorlabel,anchor=south] at (-1.2,1.68) {$B$};
 \node[sectorlabel,anchor=south] at (1.2,1.68) {$A$};
\end{tikzpicture}
\end{minipage}
\caption{(a) Color dynamics of the residual rectangle for $m=3$. The
sector label carried by each strand is fixed to $A$ or $B$ and is
denoted in the diagram. The variables $b_i$, $\gamma_i$, and $\beta_i$
denote colors, and all left input colors are fixed to zero. At each lower
endpoint the strand label is written below the input color. The dashed
line marks the cut defining the bipartition used to compute the
entanglement. (b) A single crossing with fixed
sector labels: input colors $(a,c)$ give output colors $(h(a,c),c)$,
where $h(a,c)=a+c$ in $\Ftwo$. Time runs upwards in both panels.}
\label{supp-fig:state-color-rectangle}
\end{figure}

\begin{unnumberedlemma}[Bijection of the residual color maps]
For every $m\ge1$, the maps $\beta$ and $\gamma$ defined in
Eq.~\eqref{supp-eq:state-color-maps-supp} are bijections. Consequently,
Eq.~\eqref{supp-eq:bell-core-supp} is a flat Schmidt decomposition and
$\ket{\Phi_m(A^m|B^m)}$ has entanglement entropy $m\log2$. The same
conclusion holds for $\ket{\Phi_m(B^m|A^m)}$.
\end{unnumberedlemma}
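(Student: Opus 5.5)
The plan is to exploit that, with sector labels fixed as in Fig.~\ref{supp-fig:state-color-rectangle}(a), every crossing is an $\Ftwo$-linear map on colors. Then $\beta$ and $\gamma$ are linear maps $(\Ftwo)^m\to(\Ftwo)^m$, and it suffices to show each is triangular with unit diagonal in a suitable ordering. Such a map is invertible, and this is the precise form of the reconstruction argument suggested by the one-sided non-degeneracy of $h$.

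\textbf{Setting up the recursion.} Follow $B_1,\dots,B_m$ in order, as in the text. Let $x_i^{(j)}$ be the color of strand $A_i$ after it has crossed $B_1,\dots,B_j$, so that $x_i^{(0)}=0$. Let $y_i^{(j)}$ be the color of $B_j$ just before it meets $A_i$, with $y_{m+1}^{(j)}=b_j$ and $y_1^{(j)}$ the final color of $B_j$. The local rule of Fig.~\ref{supp-fig:state-color-rectangle}(b) gives
\begin{equation*}
 y_i^{(j)}=x_i^{(j-1)}+y_{i+1}^{(j)},\qquad x_i^{(j)}=y_{i+1}^{(j)}.
\end{equation*}
The outputs are $\gamma_j=y_1^{(j)}$ and $\beta_i=x_i^{(m)}$, with the order of the output packets being immaterial for bijectivity. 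Unrolling the first relation yields
\begin{equation*}
 x_i^{(j)}=b_j+\sum_{r>i}x_r^{(j-1)},\qquad \gamma_j=b_j+\sum_{i=1}^m x_i^{(j-1)}.
\end{equation*}

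\textbf{Bijectivity of $\gamma$.} The vector $x^{(j-1)}$ depends only on $b_1,\dots,b_{j-1}$. Hence $\gamma_j=b_j+(\text{function of }b_1,\dots,b_{j-1})$, and $b$ is reconstructed from $\gamma(b)$ recursively in $j$. So $\gamma$ is a bijection.

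\textbf{Bijectivity of $\beta$.} Write $x^{(j)}=Nx^{(j-1)}+b_j\mathbf{u}$, where $\mathbf{u}=(1,\dots,1)$ and $(Nv)_i=\sum_{r>i}v_r$. Then
\begin{equation*}
 \beta(b)=\sum_{j=1}^m b_j\,N^{m-j}\mathbf{u}.
\end{equation*}
This step needs the one genuine check: the vectors $N^k\mathbf{u}$, $k=0,\dots,m-1$, form a basis of $(\Ftwo)^m$. Since $(Nv)_i$ involves only $v_r$ with $r>i$, induction on $k$ shows that $N^k\mathbf{u}$ is supported on $i\le m-k$ and has entry $1$ at $i=m-k$. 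The matrix with these columns is therefore triangular with unit diagonal, so $\beta$ is a bijection. I expect this to be the main obstacle, because $\beta$ mixes all $b_j$ in a non-obvious way; the triangular support argument is what I would try first.

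\textbf{Conclusion.} Bijectivity makes $\{\ket{\gamma(b)}\}$ and $\{\ket{\beta(b)}\}$ orthonormal families. Hence Eq.~\eqref{supp-eq:bell-core-supp} is a Schmidt decomposition with $2^m$ coefficients, each equal to $2^{-m/2}$, and the entropy is $m\log2$. For $\ket{\Phi_m(B^m|A^m)}$, note that the gate~\eqref{main-eq:gate} depends on the sectors only through $s+\tau$ and through the exchange of labels. Renaming $A\leftrightarrow B$ therefore leaves the color dynamics identical, and the same linear maps arise.
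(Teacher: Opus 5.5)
Your proof is correct. It recovers the input colors in the same order as the paper, but the packaging differs.

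The paper never writes $\beta$ or $\gamma$ in closed form. It inverts one crossing at a time, using only that $h$ is invertible in each argument separately. For $\gamma$ it sweeps forward along $B_1,B_2,\dots$, starting from the known packet $x^{(0)}=0^m$. For $\beta$ it peels backward, recovering $b_m,b_{m-1},\dots,b_1$ from successively shorter suffixes of the $A$-packet. Injectivity plus finite cardinality then gives bijectivity.

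You instead use $\Ftwo$-linearity globally. You obtain $x^{(j)}=Nx^{(j-1)}+b_j\mathbf u$ and $\beta(b)=\sum_j b_jN^{m-j}\mathbf u$, and reduce everything to unitriangularity. Your argument for $\gamma$ is the paper's forward recursion written in closed form. Back-substitution in your triangular matrix for $\beta$ recovers $b_m,b_{m-1},\dots$ in the same order as the paper's backward peeling.

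What your version buys is explicit formulas, e.g.\ $\gamma_j=b_j+\sum_i x_i^{(j-1)}$, and a one-line invertibility check via the Krylov vectors $N^k\mathbf u$. What the paper's version buys is independence from linearity. It applies verbatim to any color rule $(a,c)\mapsto(h(a,c),c)$ with $h$ invertible in each argument. It is also the local language reused later in the folded right- and left-output reconstructions.

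One small slip: you define $y_i^{(j)}$ as the color of $B_j$ ``just before it meets $A_i$.'' Your boundary values $y^{(j)}_{m+1}=b_j$ and $y^{(j)}_1=\gamma_j$, and your recursion, all treat it as the color just after crossing $A_i$, i.e.\ on arrival at $A_{i-1}$. With that reading your equations are correct.
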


\begin{proof}
We prove the two bijections by reconstructing the input color word $b$
from each output packet separately. Figure~\ref{supp-fig:state-color-reconstruction}
shows these two independent arguments for $m=3$: panel (a) uses the right
output $\beta(b)$, while panel (b) uses the left output $\gamma(b)$
together with the fixed left input colors $0^m$. The thick colored paths
show the order in which intermediate colors in the circuit are
reconstructed until all entries of $b$ are known. Arrows indicate the
direction of reconstruction. The proof consists of showing that every
local reconstruction along these paths can be performed using colors
already known at that step.

We now explain both reconstructions in words, following the order shown
in the figure. Let $x_i^{(j)}$ be the color
on $A_i$ after its crossing with $B_j$, with $x_i^{(0)}=0$ the input
color. Thus $x^{(j)}=(x_1^{(j)},\ldots,x_m^{(j)})$ is the working
$A$-packet after the first $j$ incoming $B$ strands have passed. The strand
$B_j$ enters with color $b_j$ and exits with color $\gamma_j$, while
$\beta_i=x_i^{(m)}$.
Following the crossings along $B_j$ gives
\begin{align}
 \gamma_j&=h\bigl(x_1^{(j-1)},x_1^{(j)}\bigr),\nonumber\\
 x_{i-1}^{(j)}&=h\bigl(x_i^{(j-1)},x_i^{(j)}\bigr)
       \quad(2\le i\le m),\label{supp-eq:local-peeling}\\
 x_m^{(j)}&=b_j.\nonumber
\end{align}
We first reconstruct $b$ from the left output, as in
Fig.~\ref{supp-fig:state-color-reconstruction}(b). Starting from the known
packet $x^{(0)}=0^m$, the outgoing color $\gamma_1$ determines
$x_1^{(1)}$. Reading the remaining crossings
backwards along $B_1$, from $A_2$ to $A_m$, determines the other entries
of $x^{(1)}$ and finally gives $b_1$. We then repeat this reconstruction
along $B_2,\ldots,B_m$, using the packet obtained at the preceding step.
This recovers $b$ from the left output $\gamma(b)$.

We now turn to the independent reconstruction from the right output,
shown in Fig.~\ref{supp-fig:state-color-reconstruction}(a). For $m=2$, the
reconstruction order can be checked directly.
The final entry gives $x_2^{(2)}=b_2$. The middle relation in
Eq.~\eqref{supp-eq:local-peeling} then reads
$x_1^{(2)}=h(x_2^{(1)},b_2)=h(b_1,b_2)$, which determines $b_1$
from the other final entry. Thus we recover $b_2$ before $b_1$.

For general $m$, the final packet $\beta(b)=x^{(m)}$ supports the same
backward reconstruction. Its last entry is $b_m$. For $m\ge2$, the middle relation in
Eq.~\eqref{supp-eq:local-peeling} reconstructs the colors on $A_2,\ldots,A_m$
before they crossed $B_m$, namely $(x_2^{(m-1)},\ldots,x_m^{(m-1)})$.
The last of these colors is $b_{m-1}$. Continuing backwards through the
successive $B$ strands reconstructs successively shorter suffixes of the
$A$-packet and recovers $b_m,b_{m-1},\ldots,b_1$. Each output therefore
retains the complete input color word. Thus $\beta$ and $\gamma$ are
injective, and hence bijective because their domain and codomain have the
same finite cardinality.

For the core $B^m|A^m$, exchanging the sector names leaves the local
relation $h(a,c)=a+c$ unchanged, so the same reconstruction proves
bijectivity.

Substituting the two bijections into Eq.~\eqref{supp-eq:bell-core-supp}
makes both sets of Schmidt vectors orthonormal. Its $2^m$ coefficients
are all $2^{-m/2}$, which proves the stated entropy and completes the
proof.
\end{proof}

\begin{figure}[!tb]
\centering
\begin{minipage}[t]{.48\linewidth}
\centering
\textbf{(a)}\quad $\beta\to b$\par\smallskip
\colorrectangle{%
 \draw[reconstruction,reconfirst] (5,5.55)--(5,3.20)--(4.5,2.75);
 \draw[reconstruction,reconfirst] (4.5,2.75)--(5,2.30)--(5,0);
 \draw[reconstruction,reconsecond] (4,5.55)--(4,4.10)--(3.5,3.65);
 \draw[reconstruction,reconsecond] (3.5,3.65)--(4.5,2.75);
 \draw[reconstruction,reconsecond] (4.5,2.75)--(3.5,1.85);
 \draw[reconstruction,reconsecond] (3.5,1.85)--(4,1.40)--(4,0);
 \draw[reconstruction,reconthird] (3,5.55)--(3,5.00)--(2.5,4.55);
 \draw[reconstruction,reconthird] (2.5,4.55)--(3.5,3.65);
 \draw[reconstruction,reconthird] (3.5,3.65)--(2.5,2.75);
 \draw[reconstruction,reconthird] (2.5,2.75)--(3.5,1.85);
 \draw[reconstruction,reconthird] (3.5,1.85)--(2.5,.95);
 \draw[reconstruction,reconthird] (2.5,.95)--(3,.50)--(3,0);
}
\end{minipage}\hfill
\begin{minipage}[t]{.48\linewidth}
\centering
\textbf{(b)}\quad $\gamma\to b$\par\smallskip
\colorrectangle{%
 \draw[reconstruction,reconfirst] (0,0)--(0,2.30)--(.5,2.75);
 \draw[reconstruction,reconfirst] (1,0)--(1,1.40)--(1.5,1.85);
 \draw[reconstruction,reconfirst] (2,0)--(2,.50)--(2.5,.95);
 \draw[reconstruction,reconfirst] (.5,2.75)--(1.5,3.65);
 \draw[reconstruction,reconfirst] (1.5,1.85)--(2.5,2.75);
 \draw[reconstruction,reconfirst] (2.5,.95)--(3.5,1.85);
 \draw[reconstruction,reconsecond] (1.5,3.65)--(2.5,4.55);
 \draw[reconstruction,reconsecond] (2.5,2.75)--(3.5,3.65);
 \draw[reconstruction,reconsecond] (3.5,1.85)--(4.5,2.75);
 \draw[reconstruction,reconfirst] (0,5.55)--(0,3.20)--(.5,2.75);
 \draw[reconstruction,reconfirst] (.5,2.75)--(1.5,1.85);
 \draw[reconstruction,reconfirst] (1.5,1.85)--(2.5,.95);
 \draw[reconstruction,reconfirst] (2.5,.95)--(3,.50)--(3,0);
 \draw[reconstruction,reconsecond] (1,5.55)--(1,4.10)--(1.5,3.65);
 \draw[reconstruction,reconsecond] (1.5,3.65)--(2.5,2.75);
 \draw[reconstruction,reconsecond] (2.5,2.75)--(3.5,1.85);
 \draw[reconstruction,reconsecond] (3.5,1.85)--(4,1.40)--(4,0);
 \draw[reconstruction,reconthird] (2,5.55)--(2,5.00)--(2.5,4.55);
 \draw[reconstruction,reconthird] (2.5,4.55)--(3.5,3.65);
 \draw[reconstruction,reconthird] (3.5,3.65)--(4.5,2.75);
 \draw[reconstruction,reconthird] (4.5,2.75)--(5,2.30)--(5,0);
}
\end{minipage}
\caption{Two independent reconstructions of the input color sequence
$(b_1,b_2,b_3)$ from (a) the right output $\beta$ and (b) the left output
$\gamma$ together with the fixed left input colors $0^3$. Arrows show
the reconstruction direction.}
\label{supp-fig:state-color-reconstruction}
\end{figure}

\section{Schmidt spectra in the operator time-evolution problem}
\label{supp-sec:operator-spectrum}

We now turn to operator entanglement and introduce notation that directly
generalizes the fixed-word states and count branches used in the
domain-wall problem. Our starting point is Eq.~\eqref{main-eq:op-rectangle}
of the main text:
\begin{equation}
 \ket{\Omega_t}=\widehat W_t
 \left[\iota^{\otimes t}\otimes
 \left(x_O\otimes\iota^{\otimes(t-1)}\right)\right].
 \label{supp-eq:folded-rectangle-supp}
\end{equation}
Here $\widehat W_t$ is the folded rectangle, $\iota=f(\id/2)$ is the
normalized vectorized identity, and $x_O=f(O)/\|O\|_{\HS}$ is the
normalized source. The Schmidt spectrum across the outgoing packet cut
is the normalized operator-Schmidt spectrum of $O(t)$. We perform the
calculation for $O_\times=\ket{B,0}\bra{A,0}$, whose vectorization is
$x_{O_\times}=\ket{B,0}_{\rm k}\ket{A,0}_{\rm b}$.

Recall that each site in the folded circuit carries a pair of sector
labels, one for the ket and one for the bra, which propagate together
ballistically along a strand. In principle, there are four possible
pairs: $(A,A)$, $(A,B)$, $(B,A)$, and $(B,B)$. In the computation for
$O_\times$, only three sector pairs occur: $(A,A)$ and $(B,B)$ at the
identity inputs, and the single source pair $(B,A)$.
A \emph{word} in the folded problem is an ordered sequence of
these label pairs. For the free labels, we use the shorter notation
$A$ and $B$ for $(A,A)$ and $(B,B)$, respectively; the source pair is
fixed and is not included among the free labels.

To expand Eq.~\eqref{supp-eq:folded-rectangle-supp} over sectors, we resolve
each identity input into its two normalized components:
\begin{equation}
 \iota=\frac{\iota_A+\iota_B}{\sqrt2},\qquad
 \iota_s=\frac{1}{\sqrt2}\sum_{c\in\Ftwo}
 \ket{s,c}_{\rm k}\ket{s,c}_{\rm b},\qquad s=A,B.
 \label{supp-eq:folded-sector-inputs}
\end{equation}
Thus the free sector labels are summed independently in the two input
packets, giving
\begin{equation}
 \ket{\Omega_t}
 =\frac{1}{2^{t-1/2}}
 \sum_{(s_j)\in\{A,B\}^t}
 \sum_{(\tau_j)\in\{A,B\}^{t-1}}
 \widehat W_t\left[
 \left(\bigotimes_{j=1}^t\iota_{s_j}\right)\otimes
 \left(x_{O_\times}\otimes
       \bigotimes_{j=1}^{t-1}\iota_{\tau_j}\right)
 \right].
 \label{supp-eq:folded-sector-expansion}
\end{equation}
The left packet has $t$ free labels, while the right has $t-1$ because
its first site is occupied by the source. The prefactor comes from the
$2t-1$ identity inputs. In each term the sectors are fixed, but the
colors are still summed, with equal ket and bra colors at every
nonsource input.

We denote the output for these prescribed free words by
$\ket{\widehat\Phi_t((s_j)|(\tau_j))}$:
\begin{equation}
 \ket{\widehat\Phi_t((s_j)|(\tau_j))}
 =\widehat W_t\left[
 \left(\bigotimes_{j=1}^t\iota_{s_j}\right)\otimes
 \left(x_{O_\times}\otimes
       \bigotimes_{j=1}^{t-1}\iota_{\tau_j}\right)
 \right].
 \label{supp-eq:folded-fixed-word-state}
\end{equation}
This is the folded counterpart of Eq.~\eqref{supp-eq:fixed-word-state}.
The source is always inserted before the right word and is therefore
implicit in the arguments of $\widehat\Phi_t$. For example, the state
in Eq.~\eqref{main-eq:op-core-state} of the main text is
$\ket{\widehat\Phi_d(A^d|B^{d-1})}$ for $d\ge1$.

As in the domain-wall problem, we group the free words by their counts
$\ell=N_A((s_j))$ and $k=N_A((\tau_j))$. For
$0\le\ell\le t$ and $0\le k\le t-1$, the normalized count branch is
\begin{equation}
 \ket{\widehat\phi_{k\ell}}
 =\frac{1}{\sqrt{\binom{t}{\ell}\binom{t-1}{k}}}
 \sum_{\substack{(s_j)\in\{A,B\}^t\\N_A((s_j))=\ell}}
 \sum_{\substack{(\tau_j)\in\{A,B\}^{t-1}\\N_A((\tau_j))=k}}
 \ket{\widehat\Phi_t((s_j)|(\tau_j))}.
 \label{supp-eq:folded-fixed-count-word-sum}
\end{equation}
This is the analogue of Eq.~\eqref{supp-eq:fixed-count-word-sum}. Distinct
free words give orthogonal normalized inputs, and $\widehat W_t$
preserves their inner products. The sum therefore contains
$\binom{t}{\ell}\binom{t-1}{k}$ orthonormal terms, which fixes the
normalization.

In Secs.~\ref{supp-subsec:folded-reduction}
and~\ref{supp-subsec:folded-reconstruction} below, we work out the details
that differ from the domain-wall problem. Finally,
Sec.~\ref{supp-subsec:hermitian-source} shows that the Hermitian source
$O_{\rm H}=(O_\times+O_\times^\dagger)/\sqrt2$ has the same normalized
operator-Schmidt spectrum as $O_\times$.

\subsection{Common-word removal in the folded circuit}
\label{supp-subsec:folded-reduction}

We now adapt four earlier lemmas: fixed-word
reordering and coherent-branch reduction from
Sec.~\ref{supp-subsec:word-reordering}, word fixing from
Sec.~\ref{supp-subsec:word-fixing}, and common-word removal from
Sec.~\ref{supp-subsec:common-word-removal}. We discuss the required changes
in this order.

\head{Fixed-word reordering}
In Eq.~\eqref{supp-eq:folded-fixed-word-state}, the input permutations must
leave the source strand fixed. On every
nonsource pair, the input color state is
$2^{-1/2}\sum_c\ket{c}_{\rm k}\ket{c}_{\rm b}$.
A braid involving only these pairs applies the same color permutation
in ket and bra, so it preserves the uniform sum over equal color words.
This replaces the input-color argument in the original proof. The braid
relation then moves the input permutations through the folded rectangle
to separate unitaries on the output packets, preserving the Schmidt
spectrum.

\head{Coherent-branch reduction}
The state $\ket{\widehat\phi_{k\ell}}$ in
Eq.~\eqref{supp-eq:folded-fixed-count-word-sum} sums over distinct free words
with fixed counts, keeping $(B,A)$ at the source position.
The preceding reordering argument
provides the required unitaries. Their images for distinct words are
orthogonal because their pair-label sequences differ, so the normalized
sums again define separate isometries on the two output packets. With
this restriction on the words, the original proof applies without
further changes.

\head{Word fixing}
A permutation fixing a word of label pairs fixes the sector word in
each layer separately. The original lemma therefore makes its action
the identity in both layers, since adjacent equal-sector generators
are identities. No assumption on the color state is needed, so this
argument also applies to entangled ket and bra colors.

\head{Common-word removal}
Using the allowed input permutations, we place one copy of $u$ at the
beginning of the left packet and the other at the end of the right
packet. The source remains in the residual right word. As in the
original proof, we first apply the rectangle to the residual words;
the two copies of $u$ remain product factors across the cut. The
operations moving them past the residual words are local to the output
packets. The remaining rectangle exchanges $u|u$ and acts trivially by
the folded word-fixing argument. Thus deleting the two copies preserves
the nonzero Schmidt spectrum. The only change in the residual words is
that the distinguished $(B,A)$ pair cannot be removed, giving the two
cores in Eq.~\eqref{main-eq:op-cores} of the main text.

\subsection{Reconstruction of colors in the folded circuit}
\label{supp-subsec:folded-reconstruction}

Our goal is to determine the Schmidt spectrum of the count branch
$\ket{\widehat\phi_{k\ell}}$ defined in
Eq.~\eqref{supp-eq:folded-fixed-count-word-sum}. For $d=\ell-k\ge1$, the
reduction in Sec.~\ref{supp-subsec:folded-reduction} leaves the state
$\ket{\widehat\Phi_d(A^d|B^{d-1})}$, in the notation of
Eq.~\eqref{supp-eq:folded-fixed-word-state}. We will show that this state has
a flat spectrum with $2^{d-1}$ Schmidt coefficients. The case $d\le0$
will be treated below, after Eq.~\eqref{supp-eq:core-schmidt}, by exchanging
ket and bra.

The difference from Sec.~\ref{supp-subsec:color-reconstruction} is that both
input packets now carry color sums. Equation~\eqref{supp-eq:folded-sector-inputs}
shows that each nonsource input contributes a single color shared by
ket and bra. Write $a=(a_1,\ldots,a_d)\in\Ftwo^d$ for these colors in
the left packet and $b=(b_1,\ldots,b_{d-1})\in\Ftwo^{d-1}$ for those
on the right. Both source colors are zero, so the complete right input
color word is $0b=(0,b_1,\ldots,b_{d-1})$ in each layer. The fixed sector
words are $(A,A)^d\mid(B,A)(B,B)^{d-1}$; we suppress them in the color
kets below and list the complete ket word followed by the complete bra
word.

For fixed $(a,b)$, let $\mathcal L_{\rm k}(a,b)$ and
$\mathcal L_{\rm b}(a,b)$ denote the left output color words, and let
$\mathcal R_{\rm k}(a,b)$ and $\mathcal R_{\rm b}(a,b)$ denote the
right output words, all read from left to right within their packets.
The subscripts $\rm k$ and $\rm b$ denote the ket and bra color states,
respectively.
Expanding the diagonal color inputs in
Eq.~\eqref{supp-eq:folded-fixed-word-state} gives the folded counterpart of
Eq.~\eqref{supp-eq:bell-core-supp}:
\begin{equation}
 \begin{aligned}
 \ket{\widehat\Phi_d(A^d|B^{d-1})}
 &=\frac{1}{2^{d-1/2}}
   \sum_{a\in\Ftwo^d}\sum_{b\in\Ftwo^{d-1}}
   \widehat W_d\bigl(\ket{a,a}\otimes\ket{0b,0b}\bigr)\\
 &=\frac{1}{2^{d-1/2}}
   \sum_{a\in\Ftwo^d}\sum_{b\in\Ftwo^{d-1}}
   \ket{\mathcal L_{\rm k}(a,b),\mathcal L_{\rm b}(a,b)}
   \otimes
   \ket{\mathcal R_{\rm k}(a,b),\mathcal R_{\rm b}(a,b)}.
 \end{aligned}
 \label{supp-eq:folded-color-sum}
\end{equation}
The prefactor comes from the $2d-1$ normalized color sums. The sector
words differ at the source; therefore the colors in the ket and bra
layers need not be identical. However, we will show in the lemma for
the right output below that this output remains diagonal,
i.e.\ $\mathcal R_{\rm k}(a,b)=\mathcal R_{\rm b}(a,b)$. We then reconstruct $b$ from
that output and $(a,b)$ from the left output. These properties allow
us to collect the sum over $a$ into orthonormal left vectors and obtain
the Schmidt decomposition in Eq.~\eqref{supp-eq:core-schmidt} below.

To follow the color evolution, we label the incoming left strands by
$L_1,\ldots,L_d$ and the right strands by $R_0,\ldots,R_{d-1}$, in
left-to-right order within each packet. The strand $R_0$ is the source,
with sectors $(B,A)$; every $L_i$ has sectors $(A,A)$ and each remaining
$R_j$ has sectors $(B,B)$. Their input colors are $a_i$ on $L_i$, zero
on $R_0$, and $b_j$ on $R_j$ for $j\ge1$.
Each pair $L_i,R_j$ crosses once. We label this
crossing by $(i,j)$ and evaluate the circuit along
$R_0,\ldots,R_{d-1}$ in order; each crosses $L_d,\ldots,L_1$ and exits
on the left.

\begin{unnumberedlemma}[Reconstruction from the right output]
For $d\ge1$, the right output remains diagonal. Define
$\mathcal R(b)=\mathcal R_{\rm k}(a,b)=\mathcal R_{\rm b}(a,b)$.
The map $b\mapsto\mathcal R(b)$ is injective.
\end{unnumberedlemma}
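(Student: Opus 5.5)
The plan is to follow the colors strand by strand, exactly as in the proof of the bijection lemma of Sec.~\ref{supp-subsec:color-reconstruction}, and treat the two layers in parallel. Let $x_i^{(j)}$ denote the color at the position of $L_i$ after its crossing with $R_j$, written $x_{i,\rm k}^{(j)}$ and $x_{i,\rm b}^{(j)}$ in the two layers. The initial values are $x_i^{(-1)}=a_i$ in both layers, and the right output is $x^{(d-1)}$.

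The first observation concerns a single crossing as given by Eq.~\eqref{main-eq:gate}, with left input $(s,a)$ and right input $(\tau,c)$. When $s\ne\tau$ the outputs are $(\tau,a+c;s,c)$. When $s=\tau$ the gate is the identity, but since the sectors coincide we may equally relabel the strands as crossing. In both cases the right-moving $L$ strand leaves with color $c$, the color the incoming $R$ strand carried. Only the color carried onward by $R$ depends on the case: it is $a+c$ for unequal sectors and $a$ for equal ones. Consequently, along $R_j$ (which crosses $L_d,\ldots,L_1$) we get $x_d^{(j)}=(\text{input color of }R_j)$, and $x_{i-1}^{(j)}$ equals the color $R_j$ carries after crossing $L_i$. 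For $j\ge1$ the sectors are $(B,B)$ against $(A,A)$ in both layers, so the same rule $x_{i-1}^{(j)}=h\bigl(x_i^{(j-1)},x_i^{(j)}\bigr)$ of Eq.~\eqref{supp-eq:local-peeling} holds in ket and bra. For $j=0$ the ket layer uses $h$ and the bra layer uses the trivial rule.

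Step one is an explicit computation of the source crossing. It gives $x_{i,\rm k}^{(0)}=\sum_{i'>i}a_{i'}$ and $x_{i,\rm b}^{(0)}=a_{i+1}$, with $x_d^{(0)}=0$. These agree for $i\ge d-1$, so the ket--bra discrepancy is confined to indices $i\le D_0:=d-2$.

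Step two is an induction on $j$. Unrolling the recursion gives $x_i^{(j)}=b_j+\sum_{i'>i}x_{i'}^{(j-1)}$, so $x_i^{(j)}$ depends only on entries of the previous packet with index strictly larger than $i$. Hence if the layers agree for all indices $>D_{j-1}$, they agree at step $j$ for all $i\ge D_{j-1}$, and we can take $D_j=D_{j-1}-1$. This yields $D_{d-1}=-1$, so $x^{(d-1)}$ coincides in both layers, which is the claimed diagonality. The same unrolled formula shows that $x_i^{(j)}$ involves only $a_{i'}$ with $i'>D_{j-1}+1$, and these contributions cancel identically in the final packet. I expect this to be the main obstacle: pinning down that the final packet depends on $b$ alone, and not on $a$, requires careful index bookkeeping, and I would verify it directly for $d=2,3$ first.

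Step three establishes injectivity by the backward peeling used for $\beta$ in Sec.~\ref{supp-subsec:color-reconstruction}, applied to the clean layer. From $\mathcal R(b)=x^{(d-1)}$ we read off $x_d^{(d-1)}=b_{d-1}$. Inverting $h$ in its second slot, the relations $x_{i-1}^{(d-1)}=h\bigl(x_i^{(d-2)},x_i^{(d-1)}\bigr)$ recover the suffix $x_2^{(d-2)},\ldots,x_d^{(d-2)}$. Its last entry is $b_{d-2}$, and we continue on successively shorter suffixes down to $x_d^{(1)}=b_1$. We stop before reaching the $a$-dependent packet $x^{(0)}$, so no knowledge of $a$ is needed, and $b\mapsto\mathcal R(b)$ is injective. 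For $d=1$ the statement is trivial, since $b$ is empty and the output is $0$ in both layers.
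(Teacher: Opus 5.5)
Your Steps one and two establish diagonality correctly, and Step three is the paper's backward peeling; the overall route is the paper's. The gap is the $a$-independence of the final packet, which you flag but do not prove. It is part of the lemma. Without it $\mathcal R(b)$ is not well defined, and Eq.~\eqref{supp-eq:core-schmidt} needs right vectors $\ket{\mathcal R(b),\mathcal R(b)}$ that do not depend on $a$, so that the sum over $a$ can be collected into $\ket{\lambda_b}$.

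Your sketch does not supply this. The bound ``$x_i^{(j)}$ involves only $a_{i'}$ with $i'>D_{j-1}+1=d-j$'' is false if read for all $i$. For $d=4$ the ket layer gives $x_1^{(1)}=b_1+a_3$, while the bound permits only $a_4$. Even restricted to the diagonal suffix, the permitted range $i'>d-j$ widens as $j$ grows, so it cannot give independence at $j=d-1$. Nor does any cancellation occur: the $a$'s simply never feed into the final packet.

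The missing step is already contained in your unrolled formula $x_i^{(j)}=b_j+\sum_{i'>i}x_{i'}^{(j-1)}$. Prove by induction on $j$ that every entry $x_i^{(j)}$ with $i\ge d-j$ is a function of $b_1,\ldots,b_j$ alone. For $j=0$ this is just $x_d^{(0)}=0$. For $i\ge d-j$ the formula uses only $x_{i'}^{(j-1)}$ with $i'\ge d-j+1$, which are covered by the induction hypothesis. At $j=d-1$ the range $i\ge1$ is the whole packet.

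This is the paper's argument. It starts the common suffix from the single entry $0$ on $L_d$ after the source crossing. Each $R_j$ then extends the suffix by one entry computed from the previous suffix and $b_j$ alone.

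Your sharper starting point, the length-two diagonal suffix $(a_d,0)$, is harmless for diagonality. But its first entry carries $a_d$: for instance $x_{d-2}^{(1)}=b_1+a_d$ in both layers when $d\ge3$. So the $a$-free suffix lags the diagonal one by one entry. It is the $a$-free suffix you must track, and it still reaches the full packet at the last step.
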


\begin{proof}
Figure~\ref{supp-fig:folded-diagonal-rectangle} illustrates the diagonality
of the right output for $d=3$: all its outgoing $L_i$ segments are thick.
We first prove this for arbitrary $d$ and show that the common output
depends only on $b$, then reconstruct $b$ from it. We use the local relation
$h(a,c)=a+c$ derived in Sec.~\ref{supp-subsec:color-reconstruction}.

Let $u_i^{(j)}$ and $v_i^{(j)}$
be the ket and bra colors on $L_i$ after the first $j$ right strands have
crossed it. For $0\le j<d$, these are the colors immediately before the
crossing $(i,j)$ with $R_j$. The corresponding working packets are
$u^{(j)}=(u_1^{(j)},\ldots,u_d^{(j)})$ and
$v^{(j)}=(v_1^{(j)},\ldots,v_d^{(j)})$. Thus
$u^{(0)}=v^{(0)}=a$, while $u^{(d)}$ and $v^{(d)}$ are the final right
packets.

At every crossing with $R_0$, the ket layer has unequal sectors and uses
this relation, whereas
the bra gate has equal sectors and is the identity. Since both input
colors on $R_0$ are zero, its crossing with $L_d$ leaves
$u_d^{(1)}=v_d^{(1)}=0$. Suppose that the two working packets agree on
their last $q$ strands, with $1\le q<d$, before meeting some $R_j$ with
$j\ge1$. This
strand enters with the same color $b_j$ in ket and bra. Following $R_j$
through these $q$ crossings therefore gives equal moving colors in the
two layers. At the next crossing, with $L_{d-q}$, the unequal-sector gate copies
this common moving color onto that strand in both layers, independently
of its old color. The updated packets therefore agree on their last
$q+1$ strands. Starting from the common color on $L_d$ after its crossing with
$R_0$, the passages of $R_1,\ldots,R_{d-1}$ give
\begin{equation}
 u^{(d)}=v^{(d)}=\mathcal R(b).
 \label{supp-eq:common-right-packet}
\end{equation}
At each step the new common suffix depends only on the preceding common
suffix and the incoming color $b_j$. Hence the final word is independent
of $a$.

\begin{figure}[!tb]
\centering
\begin{tikzpicture}[x=1.65cm,y=.72cm,line cap=round,line join=round,
 font=\small,
 pairedwire/.style={draw=black!78,line width=.6pt},
 diagonalwire/.style={pairedwire,line width=1.8pt},
 pairedvertex/.style={circle,draw=blue!65!black,fill=white,
   minimum size=2.1mm,inner sep=0pt,line width=.7pt},
 pairlabel/.style={font=\small,inner sep=1pt},
 pairedstrand/.style={font=\footnotesize,text=black!70,inner sep=1pt}]
 \draw[diagonalwire] (0,0)--(0,2.30)--(.5,2.75);
 \draw[pairedwire] (.5,2.75)--(1.5,3.65);
 \draw[diagonalwire] (1.5,3.65)--(3,5.00)--(3,5.55);
 \draw[diagonalwire] (1,0)--(1,1.40)--(4,4.10)--(4,5.55);
 \draw[diagonalwire] (2,0)--(2,.50)--(5,3.20)--(5,5.55);
 \draw[diagonalwire] (3,0)--(3,.50)--(1.5,1.85);
 \draw[pairedwire] (1.5,1.85)--(0,3.20)--(0,5.55);
 \draw[diagonalwire] (4,0)--(4,1.40)--(1.5,3.65);
 \draw[pairedwire] (1.5,3.65)--(1,4.10)--(1,5.55);
 \draw[diagonalwire] (5,0)--(5,2.30)--(2,5.00)--(2,5.55);
 \foreach \x/\y in {2.5/.95,1.5/1.85,3.5/1.85,
   .5/2.75,2.5/2.75,4.5/2.75,1.5/3.65,3.5/3.65,2.5/4.55}
   \node[pairedvertex] at (\x,\y){};

 \foreach \x/\i in {0/1,1/2,2/3}{
   \node[pairlabel,anchor=north] at (\x,-.08) {$(a_{\i},a_{\i})$};
   \node[pairedstrand,anchor=north] at (\x,-.80) {$L_{\i}$};
 }
 \node[pairlabel,anchor=north,draw=blue!65!black,rounded corners=1pt,
   inner sep=2pt] at (3,-.08) {$(0,0)$};
 \node[pairedstrand,anchor=north] at (3,-.80) {$R_0$};
 \node[pairlabel,anchor=north] at (3,-1.40) {source $O_\times$};
 \foreach \x/\i in {4/1,5/2}{
   \node[pairlabel,anchor=north] at (\x,-.08) {$(b_{\i},b_{\i})$};
   \node[pairedstrand,anchor=north] at (\x,-.80) {$R_{\i}$};
 }
 \foreach \x/\j in {0/0,1/1,2/2}
   \node[pairedstrand,anchor=south] at (\x,5.65) {$R_{\j}$};
 \foreach \x/\i in {3/1,4/2,5/3}
   \node[pairedstrand,anchor=south] at (\x,5.65) {$L_{\i}$};
 \node[pairlabel,anchor=south] at (1,6.20) {left output};
 \node[pairlabel,anchor=south] at (4,6.20) {right output};
 \draw[dashed,black!45] (2.5,5.15)--(2.5,6.65);
\end{tikzpicture}
\par\smallskip
\begin{tikzpicture}[font=\small,line cap=round]
 \draw[black!78,line width=1.8pt] (0,0)--(.60,0);
 \node[anchor=west] at (.72,0) {equal ket and bra colors};
 \draw[black!78,line width=.6pt] (5.40,0)--(6,0);
 \node[anchor=west] at (6.12,0) {colors may differ};
\end{tikzpicture}
\caption{Folded version of the rectangle in
Fig.~\ref{supp-fig:state-color-rectangle}(a), for $d=3$. Each line carries one
ket--bra color pair. Thick segments have equal colors for every input
word $(a,b)$; thin segments need not. The displayed input pairs are
diagonal in color. The boxed source $R_0$ carries sectors $(B,A)$,
whereas the other inputs carry $(A,A)$ on the left and $(B,B)$ on the
right. The different sector labels on $R_0$ lead to different gates
acting on the colors in the two layers: at its crossings with the
$L_i$, the ket layer applies the unequal-sector color rule, while the
bra gate is the identity. This can generate non-diagonal color pairs,
whose ket--bra differences propagate through subsequent crossings.
However, a crossing with $R_0$ does not necessarily break diagonality.
For example, $R_0$ enters its crossing with $L_3$ with colors $(0,0)$.
Since $h(a_3,0)=a_3$, the outgoing pairs are $(0,0)$ on $L_3$ and
$(a_3,a_3)$ on $R_0$, so both remain diagonal for every $a_3$.
All final segments on the $L_i$ strands at the right output are thick,
in agreement with
Eq.~\eqref{supp-eq:common-right-packet}.}
\label{supp-fig:folded-diagonal-rectangle}
\end{figure}

For $d=1$ there is no $b$ variable, so the injectivity of $\mathcal R$ is
immediate. For $d\ge2$, we use the same backwards reconstruction as in
Fig.~\ref{supp-fig:state-color-reconstruction}(a). This reconstruction does
not require the initial left colors $a$; here the right input sequence
is $(0,b_1,\ldots,b_{d-1})$, with the source color fixed to zero.
Starting from $\mathcal R(b)$, follow $R_{d-1},\ldots,R_1$ in reverse
order. Before undoing the crossings with each $R_j$, the last entry of the known
packet suffix is $b_j$. The middle relation in
Eq.~\eqref{supp-eq:local-peeling} reconstructs the colors before those
crossings, with the first strand of the known suffix omitted. For $j>1$,
the last color of this shorter suffix is $b_{j-1}$, which starts the next
step. This recovers $b_{d-1},b_{d-2},\ldots,b_1$ and proves the
injectivity of $\mathcal R$.
\end{proof}

The left output must distinguish the complete input record $(a,b)$.
This is the second ingredient needed for the Schmidt decomposition.

\begin{unnumberedlemma}[Reconstruction from the left output]
For $d\ge1$, the joint map
$(a,b)\mapsto(\mathcal L_{\rm k}(a,b),\mathcal L_{\rm b}(a,b))$
is injective.
\end{unnumberedlemma}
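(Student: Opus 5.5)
We reconstruct $(a,b)$ from the left output in two stages, reusing the ket- and bra-layer color dynamics separately. The left output consists of the outgoing colors on the strands $R_0,R_1,\ldots,R_{d-1}$, in both layers. The key observation is that in the \emph{bra} layer the source strand $R_0$ carries sector $A$, the same as every $L_i$. So every crossing $(i,0)$ is the identity in the bra layer, and $R_0$ exits with bra color $0$. The remaining bra-layer dynamics is exactly the state-problem rectangle: the $A$-packet with colors $a$ is crossed successively by $B$-strands $R_1,\ldots,R_{d-1}$ with colors $b_1,\ldots,b_{d-1}$, and no $A$-strand is affected by $R_0$. In the ket layer, $R_0$ carries $B$, so the ket layer is a rectangle with $d$ $B$-strands with colors $(0,b)$ crossing an $A$-packet with colors $a$.

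\textbf{Stage 1 (ket layer, recover $a$ and $b$ jointly).} I would generalize the left-output reconstruction from the proof of the bijection lemma (Fig.~\ref{supp-fig:state-color-reconstruction}(b)) to an arbitrary known initial $A$-packet. There, starting from a known packet $x^{(0)}$, the outgoing color $\gamma_1$ and the first relation in Eq.~\eqref{supp-eq:local-peeling} give $x_1^{(1)}$, and peeling along $B_1$ gives $x^{(1)}$ and $b_1$. Here $x^{(0)}=a$ is unknown. However, the first right strand $R_0$ has known input color $0$. So I would run the peeling \emph{in the other direction} along $R_0$. The relations $x_{i-1}^{(1)}=x_i^{(0)}+x_i^{(1)}$, $x_d^{(1)}=0$, and $\gamma_0=x_1^{(0)}+x_1^{(1)}$ involve $2d$ unknowns $x^{(0)},x^{(1)}$ and only $d+1$ equations. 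Ket data from one strand alone is therefore insufficient, which is why both layers are needed.

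\textbf{Stage 2 (combine layers).} In the bra layer the packet entering the crossings with $R_1$ is still $a$, while in the ket layer it is $u^{(1)}$. I would set up the linear map over $\Ftwo$ from $(a,b)$ to $(\mathcal L_{\rm k},\mathcal L_{\rm b})$; everything is linear because $h(a,c)=a+c$. I would then show that its kernel is trivial. Suppose the output is zero. The bra output of $R_0$ is $0$ automatically. From the bra layer alone, the map $(a,b)\mapsto\mathcal L_{\rm b}$ is the state-problem rectangle with general left colors. Peeling along $R_1$ gives the relations $a_{i-1}+v^{(2)}_{i-1}=\ldots$, so that zero outputs force a structured relation between $a$ and $b$. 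The ket layer differs only by the extra first passage of $R_0$, which by Eq.~\eqref{supp-eq:local-peeling} replaces $a$ by the shifted packet $u^{(1)}$ with $u^{(1)}_{i-1}=a_i+u^{(1)}_i$ and $u^{(1)}_d=0$, that is, by suffix sums of $a$. The ket output of $R_0$ is $\gamma_0=a_1+u_1^{(1)}=\sum_i a_i$. Linearity then means the difference of the two layers' outputs depends only on $a$ through the difference packet $a+u^{(1)}$. Since the $R_j$ dynamics is the same linear map in both layers, the difference of the left outputs is the image of $(a+u^{(1)},0)$ under the state rectangle's left map. By the $\gamma$-bijection (extended to arbitrary initial packet, with $b=0$), that image vanishing forces $a+u^{(1)}=0$. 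That relation gives $a_i=u_i^{(1)}$, which together with the recursion forces $a=0$. With $a=0$, the bra output is $\gamma(b)$ of a size-$(d-1)$ problem padded by an extra $A$-strand, and the earlier left reconstruction gives $b=0$.

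\textbf{Main obstacle.} The main obstacle is the needed extension of the $\gamma$-bijection to an arbitrary left packet: the map from the initial packet to the left output with $b=0$ must be injective. I would prove this by the same forward peeling, since the initial packet is then the only unknown. Reading $\gamma_j$ against the known structure determines successively $x^{(j)}$ from the outgoing colors. Concretely, I would first try: with $b=0$, each passage is $x\mapsto$ (suffix-sum map), which is invertible over $\Ftwo$, and $\gamma_j=\sum_i x^{(j-1)}_i$. The collection of these parities over $j$ must determine $x^{(0)}$. I would verify this via the triangular structure of powers of the suffix-sum (unipotent) matrix, which is the delicate step.
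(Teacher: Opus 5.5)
Recasting injectivity as a trivial-kernel statement is legitimate, since every local color rule is $\Ftwo$-linear. But the two facts you feed into that argument are wrong, so Stage~2 does not go through.

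\textbf{The bra-layer source crossings are miscomputed.} An equal-sector gate is the identity in \emph{site} order. In the strand-following variables used here, it therefore \emph{exchanges} the colors of the two strands: $L_i$ leaves with the color $R_0$ brought in, and $R_0$ leaves with the old color of $L_i$. Hence $y_0^{\rm b}=a_1$, not $0$, and the bra packet met by $R_1$ is $v^{(1)}=(a_2,\ldots,a_d,0)$, not $a$; this is Eq.~\eqref{supp-eq:bra-source-boundary}. Already for $d=1$ the single identity gate puts $a_1$ at the bra left output. Your model also contradicts the right-output lemma: for $d=2$ it gives bra right output $(a_2+b_1,b_1)$ against the ket's $(b_1,b_1)$. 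So the bra layer is not the state rectangle on $a$, and the layer difference is governed by $u^{(1)}+v^{(1)}$, not $a+u^{(1)}$. You also discard $y_0^{\rm b}=a_1$, which is exactly the seed of the paper's reconstruction.

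\textbf{The injectivity you invoke is false for dimensional reasons.} Write $\Gamma(x,b)$ for the left outputs of $R_1,\ldots,R_{d-1}$ given the packet $x$ they meet. Only these $d-1$ strands act identically in both layers, so $x\mapsto\Gamma(x,0)$ maps $\Ftwo^d$ to $\Ftwo^{d-1}$ and has a kernel. For instance, for $d=3$ and $a=(1,1,1)$ you get $a+u^{(1)}=(1,0,1)\neq0$ with $\Gamma((1,0,1),0)=0$. So ``the image vanishing forces $a+u^{(1)}=0$'' fails. Moreover, a passage with incoming color $0$ is the strict suffix-sum map, which is nilpotent, not unipotent. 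What is triangular is the parity matrix $y_j=\sum_i\binom{i-1}{j-1}x_i \pmod 2$.

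\textbf{How the linear route can be repaired.} Use the correct data $y_0^{\rm k}=\sum_ia_i$, $u^{(1)}_i=\sum_{i'>i}a_{i'}$, $y_0^{\rm b}=a_1$ and $v^{(1)}=(a_2,\ldots,a_d,0)$. Since $u^{(1)}_d=v^{(1)}_d=0$ and the $(d-1)$-row parity matrix has an upper unitriangular leading block, a zero left output forces $u^{(1)}=v^{(1)}$, i.e.\ $a_3=\cdots=a_d=0$. Then $a_1=0$ and $\sum_ia_i=0$ give $a=0$, and forward peeling from the zero packet gives $b=0$.

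\textbf{How the paper argues instead.} The paper's proof is constructive. It seeds with $a_1=y_0^{\rm b}$ and sweeps forward along $L_i$ in the ket layer. It then crosses to the bra layer using the right-output diagonality $v_i^{(d)}=u_i^{(d)}$ from the preceding lemma, and sweeps back along $L_i$ to $v_i^{(1)}=a_{i+1}$. Your proposal never uses that diagonality, which is what links the two layers in the paper's argument.
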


\begin{proof}
We retain the notation $u_i^{(j)}$ and $v_i^{(j)}$ from the preceding
proof. Starting from the first input color, we reconstruct the ket
colors along one left strand, use the already proved
Eq.~\eqref{supp-eq:common-right-packet} at its outgoing end, and return along
the bra strand to recover the next input color.
For $d=3$, Fig.~\ref{supp-fig:folded-color-reconstruction} shows the
alternating ket sweeps and bra returns described below.
Let $y_j^{\rm k}$ and $y_j^{\rm b}$ be the outgoing ket and bra
colors on $R_j$; these are precisely the entries of $\mathcal L_{\rm k}$
and $\mathcal L_{\rm b}$. Every bra crossing with $R_0$ is the identity in site order, whereas
our variables follow the labeled strands through the braid. At each
crossing, the outgoing $L_i$ therefore takes the color carried by $R_0$,
and $R_0$ takes the old color on $L_i$. Following $R_0$ from $L_d$ to
$L_1$ shifts the recorded colors by one place and gives
\begin{equation}
 y_0^{\rm b}=a_1,\qquad
 v_i^{(1)}=a_{i+1}\ (1\le i<d),\qquad
 v_d^{(1)}=0.
 \label{supp-eq:bra-source-boundary}
\end{equation}
We therefore know $a_1$. For $d=1$ this already proves joint injectivity,
because there is no $b$ variable. We assume $d\ge2$ in the remainder of the
reconstruction. The ket crossing $(1,0)$ gives
$y_0^{\rm k}=h(a_1,u_1^{(1)})$ and determines $u_1^{(1)}$. The later
crossings $(1,j)$ give
\begin{equation}
 y_j^{\rm k}=h(u_1^{(j)},u_1^{(j+1)}),\qquad 1\le j<d,
 \label{supp-eq:ket-forward-boundary}
\end{equation}
so following $L_1$ through its ket crossings determines
$u_1^{(2)},\ldots,u_1^{(d)}$. Equation~\eqref{supp-eq:common-right-packet}
then supplies $v_1^{(d)}=u_1^{(d)}$. The corresponding bra relations
$y_j^{\rm b}=h(v_1^{(j)},v_1^{(j+1)})$, now read from
$j=d-1$ down to $j=1$ along the same strand, determine
$v_1^{(d-1)},\ldots,v_1^{(1)}$. The last of these values is $a_2$ by
Eq.~\eqref{supp-eq:bra-source-boundary}.

We now repeat the construction along $L_2,\ldots,L_{d-1}$. Suppose that
the ket and bra colors after every crossing on $L_1,\ldots,L_{i-1}$,
together with the input color $a_i$, are already known. The ket crossing
$(i,0)$ gives
\begin{equation}
 u_{i-1}^{(1)}=h(a_i,u_i^{(1)})
 \label{supp-eq:ket-source-strand}
\end{equation}
and determines $u_i^{(1)}$. At each subsequent crossing $(i,j)$,
$1\le j<d$, the local relation
\begin{equation}
 u_{i-1}^{(j+1)}=h(u_i^{(j)},u_i^{(j+1)})
 \label{supp-eq:ket-forward-strand}
\end{equation}
determines the next ket color on $L_i$. After reaching its outgoing end,
equality of the final packets supplies $v_i^{(d)}=u_i^{(d)}$. We then use
\begin{equation}
 v_{i-1}^{(j+1)}=h(v_i^{(j)},v_i^{(j+1)})
 \label{supp-eq:bra-backward-strand}
\end{equation}
in decreasing order $j=d-1,\ldots,1$ to reconstruct the bra colors
backwards along $L_i$. The last recovered color is
$v_i^{(1)}=a_{i+1}$. Repeating these forward ket and backward bra steps
along $L_2,\ldots,L_{d-1}$ recovers $a_3,\ldots,a_d$ and hence the
complete word $a$; for $d=2$, the reconstruction along $L_1$ already
suffices.

Finally, with $a$ known, follow the source $R_0$ through its ket crossings
to obtain $u^{(1)}$. For each $R_j$, $j=1,\ldots,d-1$, the packet
$u^{(j)}$ and the outgoing color $y_j^{\rm k}$ are now known.
Starting from this outgoing color, read the crossings backwards along
$R_j$, from $L_1$ to $L_d$. Local invertibility determines the successive
ket colors and ends at the incoming color $b_j$. The same reconstruction
gives $u^{(j+1)}$, so we can proceed to $R_{j+1}$ when $j<d-1$. This
recovers $b_1,\ldots,b_{d-1}$ and completes the proof of joint
injectivity.
\end{proof}

We now construct the Schmidt decomposition by collecting the sum over
$a$ in Eq.~\eqref{supp-eq:folded-color-sum}. Define the left vectors by
\begin{equation}
 \ket{\lambda_b}=2^{-d/2}\sum_{a\in\Ftwo^d}
 \ket{\mathcal L_{\rm k}(a,b),\mathcal L_{\rm b}(a,b)}.
\end{equation}
Joint injectivity makes these vectors normalized, with disjoint supports
for distinct $b$. The normalized core state is
\begin{equation}
 \ket{\widehat\Phi_d(A^d|B^{d-1})}
 =2^{-(d-1)/2}\sum_{b\in\Ftwo^{d-1}}
 \ket{\lambda_b}\otimes\ket{\mathcal R(b),\mathcal R(b)}.
 \label{supp-eq:core-schmidt}
\end{equation}
The left vectors are orthonormal by joint injectivity, and the right
vectors by injectivity of $\mathcal R$. Thus Eq.~\eqref{supp-eq:core-schmidt}
is a flat Schmidt decomposition with $2^{d-1}$ equal coefficients.

For $d\le0$, put $n=1-d$ and use the same input-order labeling with $n$
strands in each packet. Now $L_1,\ldots,L_n$ have sectors $(B,B)$,
$R_0$ is still the source $(B,A)$, and $R_1,\ldots,R_{n-1}$ have sectors
$(A,A)$. At every crossing with $R_0$, the ket gate has equal sectors
and the bra gate is mixed. Thus the same color reconstruction applies
with ket and bra exchanged and gives $n-1=-d$ Schmidt bits (each Schmidt bit
denotes a contribution of $\log2$ to the entanglement, the value that would
be carried by a single Bell pair of qubits).
Thus this core also has a flat spectrum, with rank $2^{-d}$.

For both cases, the Schmidt probabilities within each normalized count
branch are
\begin{equation}
 \mu_j^{(d)}=2^{-g(d)}\quad(1\le j\le2^{g(d)}),\qquad
 g(d)=\left|d-\tfrac12\right|-\tfrac12,\qquad d=\ell-k.
 \label{supp-eq:op-core}
\end{equation}
The branch entropy is therefore $(\log2)g(d)$.
The finite correction to $|d|\log2$ comes from crossings with the
source: one layer undergoes a mixed-sector shear, while the other sees
an equal-sector identity.

\begingroup
\tikzset{
 f-wire/.style={draw=black!32,line width=.55pt},
 f-vertex/.style={circle,draw=black!65,fill=white,
   minimum size=2.2mm,inner sep=0pt,line width=.65pt},
 f-source/.style={f-vertex,rectangle,minimum size=2.4mm},
 f-active/.style={line width=1.75pt,postaction={decorate},
   decoration={markings,mark=at position .54 with
     {\arrow{Stealth[length=2.8mm,width=2.1mm]}}}},
 f-feed/.style={line width=1.05pt,postaction={decorate},
   decoration={markings,mark=at position .60 with
     {\arrow{Stealth[length=2mm,width=1.5mm]}}}},
 f-history/.style={line width=1.3pt},
 f-label/.style={font=\small,inner sep=1pt},
 f-strand/.style={font=\scriptsize,text=black!65,inner sep=1pt}
}

\newcommand{\FoldedWires}{%
 \foreach \x in {0,...,5}\draw[f-wire](\x,0)--(\x,.50);
 \foreach \x in {0,1,4,5}\draw[f-wire](\x,.50)--(\x,1.40);
 \foreach \x/\xp in {2/3,3/2}\draw[f-wire](\x,.50)--(\xp,1.40);
 \foreach \x in {0,5}\draw[f-wire](\x,1.40)--(\x,2.30);
 \foreach \x/\xp in {1/2,2/1,3/4,4/3}
   \draw[f-wire](\x,1.40)--(\xp,2.30);
 \foreach \x/\xp in {0/1,1/0,2/3,3/2,4/5,5/4}
   \draw[f-wire](\x,2.30)--(\xp,3.20);
 \foreach \x in {0,5}\draw[f-wire](\x,3.20)--(\x,4.10);
 \foreach \x/\xp in {1/2,2/1,3/4,4/3}
   \draw[f-wire](\x,3.20)--(\xp,4.10);
 \foreach \x in {0,1,4,5}\draw[f-wire](\x,4.10)--(\x,5.00);
 \foreach \x/\xp in {2/3,3/2}\draw[f-wire](\x,4.10)--(\xp,5.00);
 \foreach \x in {0,...,5}\draw[f-wire](\x,5.00)--(\x,5.55);
}
\newcommand{\FoldedVertices}[1]{%
 \foreach \x/\y in {3.5/1.85,2.5/2.75,4.5/2.75,1.5/3.65,3.5/3.65,2.5/4.55}
   \node[f-vertex] at (\x,\y){};
 \ifnum#1=0
   \foreach \x/\y in {2.5/.95,1.5/1.85,.5/2.75}
     \node[f-vertex] at (\x,\y){};
 \else
   \foreach \x/\y in {2.5/.95,1.5/1.85,.5/2.75}
     \node[f-source] at (\x,\y){};
 \fi
}
\newcommand{\FoldedLabels}[1]{%
 \foreach \x/\i in {0/1,1/2,2/3}{
   \node[f-label,anchor=north] at (\x,-.10) {$a_{\i}$};
   \node[f-strand,anchor=north] at (\x,-.56) {$L_{\i}$};
 }
 \node[f-label,anchor=north] at (3,-.10) {$0$};
 \node[f-strand,anchor=north] at (3,-.56) {$R_0$};
 \foreach \x/\i in {4/1,5/2}{
   \node[f-label,anchor=north] at (\x,-.10) {$b_{\i}$};
   \node[f-strand,anchor=north] at (\x,-.56) {$R_{\i}$};
 }
 \ifnum#1=0
   \foreach \x/\j in {0/0,1/1,2/2}
     \node[f-label,anchor=south] at (\x,5.68) {$y_{\j}^{\rm k}$};
   \foreach \x/\i in {3/1,4/2,5/3}
     \node[f-label,anchor=south] at (\x,5.68) {$u_{\i}^{(3)}$};
 \else
   \foreach \x/\j in {0/0,1/1,2/2}
     \node[f-label,anchor=south] at (\x,5.68) {$y_{\j}^{\rm b}$};
   \foreach \x/\i in {3/1,4/2,5/3}
     \node[f-label,anchor=south] at (\x,5.68) {$v_{\i}^{(3)}$};
 \fi
}
\newcommand{\FoldedPanel}[3][]{%
 \begin{tikzpicture}[x=1.18cm,y=.78cm,line cap=round,line join=round,#1]
 \path[use as bounding box](-.48,-.98) rectangle (5.48,6.45);
 \FoldedWires
 #3
 \FoldedVertices{#2}
 \FoldedLabels{#2}
 \end{tikzpicture}%
}
\newcommand{\FoldedKetOne}[1]{%
 \draw[#1] (0,0)--(0,2.30)--(.5,2.75);
 \draw[#1] (.5,2.75)--(1.5,3.65);
 \draw[#1] (1.5,3.65)--(2.5,4.55);
 \draw[#1] (2.5,4.55)--(3,5.00)--(3,5.55);
}
\newcommand{\FoldedKetTwo}[1]{%
 \draw[#1] (1,0)--(1,1.40)--(1.5,1.85);
 \draw[#1] (1.5,1.85)--(2.5,2.75);
 \draw[#1] (2.5,2.75)--(3.5,3.65);
 \draw[#1] (3.5,3.65)--(4,4.10)--(4,5.55);
}
\newcommand{\FoldedBraSeed}[1]{%
 \draw[#1] (0,5.55)--(0,3.20)--(.5,2.75);
 \draw[#1] (.5,2.75)--(0,2.30)--(0,0);
}
\newcommand{\FoldedBraOne}[1]{%
 \draw[#1] (3,5.55)--(3,5.00)--(2.5,4.55);
 \draw[#1] (2.5,4.55)--(1.5,3.65);
 \draw[#1] (1.5,3.65)--(.5,2.75);
 \draw[#1] (.5,2.75)--(1.5,1.85);
 \draw[#1] (1.5,1.85)--(1,1.40)--(1,0);
}
\newcommand{\FoldedBraTwo}[1]{%
 \draw[#1] (4,5.55)--(4,4.10)--(3.5,3.65);
 \draw[#1] (3.5,3.65)--(2.5,2.75);
 \draw[#1] (2.5,2.75)--(1.5,1.85);
 \draw[#1] (1.5,1.85)--(2.5,.95);
 \draw[#1] (2.5,.95)--(2,.50)--(2,0);
}
\newcommand{\FoldedKetFeedOne}[1]{%
 \draw[#1] (0,5.55)--(0,3.20)--(.5,2.75);
 \draw[#1] (1,5.55)--(1,4.10)--(1.5,3.65);
 \draw[#1] (2,5.55)--(2,5.00)--(2.5,4.55);
}
\newcommand{\FoldedBraFeedOne}[1]{%
 \draw[#1] (1,5.55)--(1,4.10)--(1.5,3.65);
 \draw[#1] (2,5.55)--(2,5.00)--(2.5,4.55);
}
\newcommand{\FoldedKetFeedTwo}[1]{%
 \draw[#1] (.5,2.75)--(1.5,1.85);
 \draw[#1] (1.5,3.65)--(2.5,2.75);
 \draw[#1] (2.5,4.55)--(3.5,3.65);
}
\newcommand{\FoldedBraFeedTwo}[1]{%
 \draw[#1] (1.5,3.65)--(2.5,2.75);
 \draw[#1] (2.5,4.55)--(3.5,3.65);
}

\begin{figure}[!t]
\centering
\begin{minipage}[t]{.485\linewidth}\centering
\textbf{(a)} First ket sweep\par\smallskip
\FoldedPanel[x=1.08cm,y=.61cm]{0}{%
\FoldedKetFeedOne{f-feed,reconfirst!60}
\FoldedKetOne{f-active,reconfirst}
}
\end{minipage}\hfill
\begin{minipage}[t]{.485\linewidth}\centering
\textbf{(b)} First bra return\par\smallskip
\FoldedPanel[x=1.08cm,y=.61cm]{1}{%
\FoldedBraSeed{f-feed,black!75}
\FoldedBraFeedOne{f-feed,reconfirst!60}
\FoldedBraOne{f-active,reconfirst}
}
\end{minipage}
\par\medskip
\begin{minipage}[t]{.485\linewidth}\centering
\textbf{(c)} Second ket sweep\par\smallskip
\FoldedPanel[x=1.08cm,y=.61cm]{0}{%
\FoldedKetOne{f-history,reconfirst!22}
\FoldedKetFeedOne{f-history,reconfirst!22}
\FoldedKetFeedTwo{f-feed,reconsecond!65}
\FoldedKetTwo{f-active,reconsecond}
}
\end{minipage}\hfill
\begin{minipage}[t]{.485\linewidth}\centering
\textbf{(d)} Second bra return\par\smallskip
\FoldedPanel[x=1.08cm,y=.61cm]{1}{%
\FoldedBraSeed{f-history,reconfirst!22}
\FoldedBraOne{f-history,reconfirst!22}
\FoldedBraFeedOne{f-history,reconfirst!22}
\FoldedBraFeedTwo{f-feed,reconsecond!65}
\FoldedBraTwo{f-active,reconsecond}
}
\end{minipage}
\caption{Left reconstruction for $d=3$. Panels (a,c) show the ket layer,
and panels (b,d) show the bra layer. In the text we introduced
$u_i^{(j)}$ and $v_i^{(j)}$ as the intermediate ket and bra colors on
the left strand $L_i$. Here their values at $j=3$ label the outgoing
endpoints.
The lemma \emph{Reconstruction from the right output} establishes
$u_i^{(3)}=v_i^{(3)}$ for $i=1,2,3$.
Squares mark identity gates where both crossing strands carry sector
$A$; in this core, these occur only on $R_0$ in the bra layer.
The reconstruction starts with $a_1=y_0^{\rm b}$, a step not shown
separately: the identity crossing of $R_0$ with $L_1$ places the input
color $a_1$ at the left output. With this seed, panels (a,b) show the
sweep along $L_1$ and the bra return through the identity crossings
to the input of $L_2$; panels (c,d) repeat this step along $L_2$,
ending at the input of $L_3$.
Saturated arrows show the current sweep,
pale blue lines retain the preceding round trip, and lighter arrows
supply known colors.}
\label{supp-fig:folded-color-reconstruction}
\end{figure}

\endgroup

\subsection{A Hermitian source}
\label{supp-subsec:hermitian-source}

We now show that replacing $O_\times$ by $O_{\rm H}$ preserves the
normalized Schmidt spectrum of the full state $\ket{\Omega_t}$ in
Eq.~\eqref{supp-eq:folded-rectangle-supp}. We need right-output diagonality
before count projection. The induction used in the right-output lemma
of Sec.~\ref{supp-subsec:folded-reconstruction} applies to every fixed-word
state in Eq.~\eqref{supp-eq:folded-fixed-word-state}: the source gives the
last left strand equal ket and bra colors, and each remaining right
strand extends the common suffix by one. This works for arbitrary free
sector words because, for both equal and unequal sectors, the local
rule copies the incoming right-strand color onto the outgoing left
strand. The right output also has equal ket and bra sectors, inherited
from the left inputs. Hence the entire coherent sum in
Eq.~\eqref{supp-eq:folded-sector-expansion} has diagonal right support.

Let $M$ be the normalized coefficient matrix of $\ket{\Omega_t}$ for
$O_\times$, with left output states indexing rows and right output
states indexing columns. Its entries are real, so taking the adjoint
of the source exchanges ket and bra. This exchange acts trivially on
the diagonal right support; the coefficient matrix for
$O_\times^\dagger$ is therefore $VM$, where $V$ exchanges ket and bra
on the left. The unitary $V$ changes the unique source pair from
$(B,A)$ to $(A,B)$, giving orthogonal left supports. Thus
$V^\dagger V=\id$ and $M^\dagger VM=0$.

For the Hermitian source, the normalized coefficient matrix is
$M_{\rm H}=(M+VM)/\sqrt2$. Both cross terms vanish by
$M^\dagger VM=0$ and its adjoint, $M^\dagger V^\dagger M=0$.
Using $V^\dagger V=\id$, we obtain
\begin{equation*}
 M_{\rm H}^\dagger M_{\rm H}
 =\frac12\bigl(M^\dagger M+M^\dagger V^\dagger VM\bigr)
 =M^\dagger M.
\end{equation*}
The eigenvalues of these Gram matrices are the squared Schmidt
coefficients, so the two sources have the same normalized Schmidt
spectrum.

\section{Entropy bounds and square-root growth}
\label{supp-sec:coherent-count}

In this section we prove a general entropy bound for superpositions of
orthogonal count sectors. We apply it to the domain-wall and folded
operator states and then evaluate the two binomial averages that
determine the leading von Neumann entropies.

\subsection{Entropy bound}

To treat both problems in one statement, consider a normalized
bipartite pure state decomposed into orthogonal count sectors,
\begin{equation}
 \ket{\Psi}=\sum_{k,\ell}\sqrt{\pi_{k\ell}}\ket{\chi_{k\ell}}
 \label{supp-eq:coherent-count-state}
\end{equation}
where $\ket{\chi_{k\ell}}$ is the normalized branch with count labels
$k,\ell$. We write $\Prob$ for probability, so that
$\pi_{k\ell}=\Prob(K=k,L=\ell)$ is the joint distribution of the count
variables $K,L$. The two counts are locally resolved by
orthogonal left and right projectors $P_k^{\rm L}$ and $P_\ell^{\rm R}$
satisfying
\begin{equation}
 (P_k^{\rm L}\otimes P_\ell^{\rm R})\ket{\chi_{k'\ell'}}
 =\delta_{kk'}\delta_{\ell\ell'}\ket{\chi_{k'\ell'}}.
 \label{supp-eq:count-sector-support}
\end{equation}
For any bipartite pure state $\ket{\psi}$, we write
$S_1(\psi)=S(\operatorname{tr}_{\rm L}\ket{\psi}\bra{\psi})$, where
$S(\rho)=-\operatorname{tr}(\rho\log\rho)$, and we denote the Shannon
entropy of the joint count distribution by
$H(K,L)=-\sum_{k,\ell}\pi_{k\ell}\log \pi_{k\ell}$.

\begin{unnumberedlemma}[Count-sector entropy bound]
We have the bounds
\begin{equation}
 \sum_{k,\ell}\pi_{k\ell}S_1(\chi_{k\ell})
 \le S_1(\Psi)
 \le H(K,L)+\sum_{k,\ell}\pi_{k\ell}S_1(\chi_{k\ell}).
 \label{supp-eq:general-coherent-count-bound}
\end{equation}
Thus the entropy of the full state is at least the mean branch
entropy and exceeds it by at most the Shannon entropy of the count
labels.
\end{unnumberedlemma}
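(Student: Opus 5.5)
Take the reduced density matrix on the side where the Schmidt spectrum is computed; the two sides give the same nonzero spectrum, so the choice is harmless. Call that side $\rm R$ and use the projectors $P_\ell^{\rm R}$. Write $\rho=\operatorname{tr}_{\rm L}\ket{\Psi}\bra{\Psi}$. The plan has three steps: expand $\rho$ using the sector support condition \eqref{supp-eq:count-sector-support}, get the lower bound from concavity, and get the upper bound from the standard mixing inequality $S(\sum_i p_i\rho_i)\le H(p)+\sum_i p_iS(\rho_i)$~\cite{main-NielsenChuang2010}.

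\emph{Step 1 (block structure).} Expand $\rho=\sum_{k,\ell,k',\ell'}\sqrt{\pi_{k\ell}\pi_{k'\ell'}}\operatorname{tr}_{\rm L}\ket{\chi_{k\ell}}\bra{\chi_{k'\ell'}}$. The left supports $P_k^{\rm L}$ and $P_{k'}^{\rm L}$ are orthogonal when $k\ne k'$, so the partial trace over L removes every cross term with $k\neq k'$. The remaining cross terms, with $k=k'$ and $\ell\ne\ell'$, survive. They are off-diagonal blocks $P_\ell^{\rm R}(\cdot)P_{\ell'}^{\rm R}$. Consequently $\rho$ is not simply the mixture $\sigma:=\sum_{k,\ell}\pi_{k\ell}\rho_{k\ell}$ with $\rho_{k\ell}=\operatorname{tr}_{\rm L}\ket{\chi_{k\ell}}\bra{\chi_{k\ell}}$. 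However, $\sigma=\sum_\ell P_\ell^{\rm R}\rho P_\ell^{\rm R}$ after grouping. Equivalently, one can write $\rho=\sum_k\pi_k\rho^{(k)}$, where $\rho^{(k)}$ is the reduced state of the normalized superposition $\ket{\Psi_k}\propto\sum_\ell\sqrt{\pi_{k\ell}}\ket{\chi_{k\ell}}$, and the $\rho^{(k)}$ need not have orthogonal supports.

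\emph{Step 2 (lower bound).} Pinching, i.e.\ the map $\rho\mapsto\sum_\ell P_\ell^{\rm R}\rho P_\ell^{\rm R}=\sigma$, does not decrease entropy, so $S(\sigma)\ge S(\rho)$. That inequality points the wrong way for a lower bound, so I would instead argue from the pure state. Dephasing by the left count is a local unital channel applied to $\ket\Psi$; it produces the ensemble $\{\pi_k,\ket{\Psi_k}\}$ and leaves the R-marginal unchanged. Hence $S_1(\Psi)=S(\sum_k\pi_k\rho^{(k)})\ge\sum_k\pi_kS_1(\Psi_k)$ by concavity. For each $\Psi_k$, the reduced state on L is $\operatorname{tr}_{\rm R}\ket{\Psi_k}\bra{\Psi_k}=\sum_\ell(\pi_{k\ell}/\pi_k)\operatorname{tr}_{\rm R}\ket{\chi_{k\ell}}\bra{\chi_{k\ell}}$, because the orthogonal right supports kill the cross terms under the trace over R. Applying concavity a second time gives $S_1(\Psi_k)\ge\sum_\ell(\pi_{k\ell}/\pi_k)S_1(\chi_{k\ell})$, and chaining the two concavity steps gives the lower bound.

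\emph{Step 3 (upper bound).} Use the same two-stage decomposition with the mixing inequality. First, $S_1(\Psi)=S(\sum_k\pi_k\rho^{(k)})\le H(K)+\sum_k\pi_kS_1(\Psi_k)$. Second, working on the L side for each $k$, $S_1(\Psi_k)\le H(L|K=k)+\sum_\ell(\pi_{k\ell}/\pi_k)S_1(\chi_{k\ell})$. Averaging over $k$ and applying the chain rule $H(K)+H(L|K)=H(K,L)$ yields the upper bound. Throughout I would use freely that a pure state has equal entropies on its two marginals.

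The main obstacle is the observation in Step 1: $\rho$ is \emph{not} block diagonal in both counts simultaneously, so a naive "orthogonal mixture" argument fails. The essential idea is to resolve one count on each side in turn, dephasing the left count while viewing the right marginal and then the right count while viewing the left marginal. The part I would check most carefully is that dephasing one side leaves the other marginal exactly invariant.
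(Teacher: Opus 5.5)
Your proposal is correct and follows the paper's proof essentially step for step. You group by the left count so that $\operatorname{tr}_{\rm L}\ket{\Psi}\bra{\Psi}=\sum_k\pi_k\rho^{(k)}$, pass to the left marginal of each conditional state $\ket{\Psi_k}$ (where the right count sectors kill the cross terms), apply the lower and upper mixture-entropy bounds twice, and use the chain rule $H(K)+\sum_k\pi_kH(L|K=k)=H(K,L)$ for the upper bound. The pinching detour in Step 2 is unnecessary but harmless.
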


\begin{proof}
We first group the branches by the left count. Define the conditional
branch states by
\begin{equation}
 \pi_k=\sum_\ell \pi_{k\ell},\qquad
 \pi_{\ell|k}=\frac{\pi_{k\ell}}{\pi_k},\qquad
 \ket{\chi_k}=\sum_\ell\sqrt{\pi_{\ell|k}}\ket{\chi_{k\ell}},
 \label{supp-eq:conditional-count-state}
\end{equation}
where values of $k$ with $\pi_k=0$ can be omitted. Let $\rho_{\rm R}$ be
the right reduced density matrix of $\ket{\Psi}$ and let
$\rho_{\rm R}^{(k)}$ be that of $\ket{\chi_k}$. Similarly, let
$\rho_{\rm L}^{(k)}$ and $\rho_{\rm L}^{(k\ell)}$ be the left reduced
density matrices of $\ket{\chi_k}$ and $\ket{\chi_{k\ell}}$.
Tracing over the orthogonal left count sectors in
Eq.~\eqref{supp-eq:count-sector-support} removes cross terms with different
$k$ and gives the first decomposition below. Within a fixed $k$ branch,
tracing over the orthogonal right count sectors removes cross terms with
different $\ell$ and gives the second:
\begin{equation}
 \rho_{\rm R}=\sum_k \pi_k\rho_{\rm R}^{(k)},\qquad
 \rho_{\rm L}^{(k)}
 =\sum_\ell \pi_{\ell|k}\rho_{\rm L}^{(k\ell)}.
 \label{supp-eq:count-density-decompositions}
\end{equation}
The two reduced density matrices of a pure state have the same nonzero
spectrum. Therefore
$S(\rho_{\rm R}^{(k)})=S(\rho_{\rm L}^{(k)})$.
We now use the standard entropy-of-a-mixture bounds
\cite{supp-NielsenChuang2010},
\begin{equation}
 \sum_i r_iS(\sigma_i)
 \le S\left(\sum_i r_i\sigma_i\right)
 \le H(\{r_i\})+\sum_i r_iS(\sigma_i).
 \label{supp-eq:entropy-of-mixture}
\end{equation}
Applying first the lower inequality to both decompositions in
Eq.~\eqref{supp-eq:count-density-decompositions} gives
\begin{equation}
 S_1(\Psi)\ge
 \sum_{k,\ell}\pi_{k\ell}S_1(\chi_{k\ell}).
\end{equation}
Applying instead the upper inequality gives
\begin{align}
 S_1(\Psi)
 &\le H(K)+\sum_k\pi_k H(L|K=k)\nonumber\\
 &\hspace{1.2cm}+\sum_{k,\ell}\pi_{k\ell}S_1(\chi_{k\ell})\nonumber\\
 &=H(K,L)+\sum_{k,\ell}\pi_{k\ell}S_1(\chi_{k\ell}).
\end{align}
This proves the lemma.
\end{proof}

Independence is used only when evaluating $H(K,L)$ for the binomial
counts in Sec.~\ref{supp-subsec:binomial-vn} below.
For the domain-wall state, setting
$\ket{\Psi}=\ket{\Phi_t}$, $\ket{\chi_{k\ell}}=\ket{\phi_{k\ell}}$,
and $\pi_{k\ell}=\pi^{\dw}_{k\ell}$ gives
Eq.~\eqref{main-eq:coherent-count-bound} of the main text.
For the operator problem, setting
$\ket{\Psi}=\ket{\Omega_t}$,
$\ket{\chi_{k\ell}}=\ket{\widehat\phi_{k\ell}}$, and taking
$\pi_{k\ell}$ to be the product binomial distribution given in
Eq.~\eqref{supp-eq:count-weights} below gives the corresponding bound for
the operator entropy.

\subsection{Binomial averages and von Neumann entropy}
\label{supp-subsec:binomial-vn}

We assume throughout this subsection that $t>1$. We now compute the mean
branch entropies and give the corresponding finite-time bounds. The state
and operator problems involve
centered binomial variables with $2t$ and $2t-1$ trials, respectively.
Their mean absolute deviations are equal; the inserted local operator
shifts the mean operator branch entropy by a constant. Stirling's formula
then gives the two square-root laws.

\head{Domain-wall state}
The independent counts are $K,L\sim\Bin(t,1/2)$, where $\Bin(n,p)$
denotes the binomial distribution with $n$ trials and success probability
$p$. A normalized branch has entropy $|K-L|\log2$. We denote the mean
branch entropy by
$\overline S^{\dw}(t)=(\log2)\E|K-L|$.
Since $t-L$ has the same binomial distribution as $L$, the difference
$K-L$ has the law of $Z-t$, where $Z\sim\Bin(2t,1/2)$.
Symmetry about $t$ and a telescoping sum give
\begin{align}
 \E|K-L|
 &=\frac{2}{4^t}\sum_{j=t+1}^{2t}(j-t)\binom{2t}{j}\nonumber\\
 &=\frac{2t}{4^t}\sum_{j=t+1}^{2t}
 \left[\binom{2t-1}{j-1}-\binom{2t-1}{j}\right]
 =t\frac{\binom{2t}{t}}{4^t}.
 \label{supp-eq:state-binomial-mean}
\end{align}
Thus the average in Eq.~\eqref{main-eq:state-mean} of the main text is
exactly $\overline S^{\dw}(t)=(\log2)t\binom{2t}{t}/4^t$.
Independence gives $H(K,L)=2H(\Bin(t,1/2))$, where
$H(\Bin(n,1/2))$ denotes the Shannon entropy of a binomial distribution.
Applying Eq.~\eqref{supp-eq:general-coherent-count-bound}, we obtain
\begin{equation}
 \overline S^{\dw}(t)\le S_1^{\dw}(t)
 \le\overline S^{\dw}(t)+2H(\Bin(t,1/2)).
 \label{supp-eq:state-vn-finite}
\end{equation}

\head{Operator}
Now $K\sim\Bin(t-1,1/2)$ and $L\sim\Bin(t,1/2)$ are independent.
Since $t-1-K$ has the same binomial distribution as $K$, the variable
$\Sigma=L-K+t-1$ has distribution $\Bin(2t-1,1/2)$.
The branch entropy from Eq.~\eqref{supp-eq:op-core} is $(\log2)g(L-K)$, with
$g(L-K)=|\Sigma-(2t-1)/2|-1/2$.
The same telescoping step gives
\begin{align}
 \E\left|\Sigma-\frac{2t-1}{2}\right|
 &=\frac{1}{2^{2t-2}}\sum_{j=t}^{2t-1}
 \left(j-\frac{2t-1}{2}\right)\binom{2t-1}{j}\nonumber\\
 &=\frac{2t-1}{2^{2t-1}}\binom{2t-2}{t-1}
 =t\frac{\binom{2t}{t}}{4^t}.
 \label{supp-eq:operator-binomial-mean}
\end{align}
The mean operator branch entropy is therefore
\begin{equation}
 \overline S^{\op}(t)=(\log2)\E g(L-K)
 =\overline S^{\dw}(t)-\tfrac12\log2.
 \label{supp-eq:operator-mean-offset}
\end{equation}
This exact offset concerns the mean branch entropies; the entropies of
the full states are bounded using
Eq.~\eqref{supp-eq:general-coherent-count-bound}. The uniform bound
$0\le|d|-g(d)\le1$ justifies absorbing the correction into an $O(1)$
remainder in Eq.~\eqref{main-eq:op-mean} of the main text.
The count-sector bound gives
\begin{equation}
 \overline S^{\op}(t)\le S_1^{\op}(t)
 \le\overline S^{\op}(t)+H(\Bin(t-1,1/2))+H(\Bin(t,1/2)).
 \label{supp-eq:operator-vn-finite}
\end{equation}

\head{Asymptotics}
Stirling's formula yields
\begin{equation}
 t\frac{\binom{2t}{t}}{4^t}=\sqrt{\frac{t}{\pi}}+O(t^{-1/2}).
 \label{supp-eq:binomial-mean-asymptotic}
\end{equation}
Multiplying this result by $\log2$ gives the state mean entropy; subtracting
$\tfrac12\log2$ gives the operator mean entropy. The count pairs have
at most $(t+1)^2$ possible values in the state problem and $t(t+1)$ in
the operator problem. Their Shannon entropies are therefore $O(\log t)$.
The finite-time bounds~\eqref{supp-eq:state-vn-finite} and
\eqref{supp-eq:operator-vn-finite} consequently prove
Eqs.~\eqref{main-eq:state-vn} and \eqref{main-eq:op-vn} of the main text.
More precisely, the binomial entropy asymptotic
$H(\Bin(n,1/2))=\tfrac12\log n+O(1)$ shows that both upper-bound
corrections are $\log t+O(1)$.

\section{R\'enyi entropies}
\label{supp-sec:renyi}

We derive the R\'enyi bounds stated in
Eq.~\eqref{main-eq:hierarchy} of the main text. The input is the flat
spectrum of each count branch, proved in Sec.~\ref{supp-subsec:folded-reconstruction}
and summarized in Eq.~\eqref{supp-eq:op-core}.
The remaining task is to control the coherent sum of these branches.
We first bound its exact rank, then determine the low-index growth rate,
and finally bound the largest Schmidt probability to obtain the high-index
regime. The von Neumann law in Eq.~\eqref{main-eq:op-vn} of the main
text was proved in Sec.~\ref{supp-subsec:binomial-vn}; the corresponding state problem is treated in
Sec.~\ref{supp-subsec:state} below.

\subsection{Count blocks and bounds on the exact rank}
\label{supp-subsec:blocks}

The operator specialization of the count-sector decomposition
\eqref{supp-eq:coherent-count-state} is
\begin{equation*}
 \ket{\Omega_t}
 =\sum_{k=0}^{t-1}\sum_{\ell=0}^{t}
 \sqrt{\pi^{\op}_{k\ell}}\ket{\widehat\phi_{k\ell}}.
\end{equation*}
As in Sec.~\ref{supp-subsec:binomial-vn}, the counts are independent, with
$K\sim\Bin(t-1,1/2)$ and $L\sim\Bin(t,1/2)$. Their joint weights are
\begin{equation}
 \pi^{\op}_{k\ell}
 =2^{-(2t-1)}\binom{t-1}{k}\binom{t}{\ell}.
 \label{supp-eq:count-weights}
\end{equation}

We now pass from this state-vector decomposition to the matrix language
needed for rank and Schatten-norm estimates. Choose orthonormal bases
$\{\ket{\mu}_{\rm L}\}$ and $\{\ket{\nu}_{\rm R}\}$ of the full left and
right output Hilbert spaces, respectively, with each basis vector lying
in a definite count sector. We define the normalized coefficient matrix
$M$ by
\begin{equation*}
 \ket{\Omega_t}=\sum_{\mu,\nu}M_{\mu\nu}
 \ket{\mu}_{\rm L}\ket{\nu}_{\rm R}.
\end{equation*}
This is the same matrix used in Sec.~\ref{supp-subsec:hermitian-source}. Its
singular values are the Schmidt coefficients of $\ket{\Omega_t}$, so
their squares are the normalized operator-Schmidt probabilities $p_j$.
For each count pair, let $M_{k\ell}$ be the block obtained by restricting
the rows and columns to the ranges of $P_k^{\rm L}$ and $P_\ell^{\rm R}$.
Equivalently, $M_{k\ell}$ is the coefficient matrix of the weighted
branch $\sqrt{\pi^{\op}_{k\ell}}\ket{\widehat\phi_{k\ell}}$. Hence
\begin{equation*}
 M=\sum_{k,\ell}M_{k\ell},\qquad
 \|M\|_{\HS}=1,\qquad
 \|M_{k\ell}\|_{\HS}^2=\pi^{\op}_{k\ell}.
\end{equation*}
For any matrix $A$, let $s_j(A)$, $1\le j\le\rank A$, denote its
nonzero singular values, counted with multiplicity. They are the
positive square roots of the nonzero eigenvalues of $A^\dagger A$.
The flat branch spectrum in Eq.~\eqref{supp-eq:op-core} therefore gives
\begin{equation}
 s_j(M_{k\ell})^2=\pi^{\op}_{k\ell}2^{-g(d)},\qquad
 1\le j\le2^{g(d)},\qquad
 d=\ell-k.
 \label{supp-eq:block-spectrum}
\end{equation}
These are spectra of individual blocks. Blocks sharing a row or column
count need not have orthogonal Schmidt vectors on that side, so we must
bound the spectrum of their sum.

At fixed time, the limit of vanishing R\'enyi index counts all nonzero
Schmidt probabilities equally. Writing $\OSR(t)$ for the operator Schmidt
rank, we obtain
\begin{equation*}
 S_0^\op(t):=\lim_{\alpha\to0^+}S_\alpha^\op(t)
 =\log\#\{j:p_j(t)>0\}=\log\OSR(t).
\end{equation*}
We determine its long-time growth directly from bounds on this rank.
For the lower bound, we apply the left and right count projectors to
isolate a branch with a large Schmidt rank. These projections cannot
increase Schmidt rank, so the selected branch gives a lower bound on the
full OSR. For the upper bound, rank is subadditive under addition. The
extreme block $(k,\ell)=(0,t)$ has rank $2^{t-1}$, giving
\begin{equation}
 2^{t-1}\le\OSR(t)=\rank M
 \le\sum_{k=0}^{t-1}\sum_{\ell=0}^t2^{g(\ell-k)}
 =2\sum_{j=1}^t j\,2^{t-j}=2^{t+2}-2t-4.
 \label{supp-eq:rank-bounds}
\end{equation}
Taking logarithms gives $S_0^\op(t)=t\log2+O(1)$. Thus the exact rank
already grows exponentially because an extreme count branch retains an
extensive number of Schmidt bits.

\subsection{R\'enyi indices \texorpdfstring{$0<\alpha<1$}{0 < alpha < 1}}
\label{supp-subsec:low}

For fixed $0<\alpha<1$, as $t\to\infty$, we compare the full
R\'enyi moment with the sum of the block moments, both defined below in
Eq.~\eqref{supp-eq:moments}. We show that their logarithms differ by at most
$O(\log t)$.

For $p>0$ and a matrix $A$ of rank $r$, we define
\begin{equation*}
 \|A\|_p^p=\sum_{j=1}^{r}s_j(A)^p
 =\operatorname{tr}\!\left[(A^\dagger A)^{p/2}\right].
\end{equation*}
For $p\ge1$, $\|A\|_p$ is the Schatten $p$-norm; for $0<p<1$, it is
the corresponding quasi-norm. Since $p_j=s_j(M)^2$,
Eq.~\eqref{supp-eq:block-spectrum} gives the full and block moments as
\begin{equation}
 \begin{aligned}
 Z_\alpha
 &=\sum_{j=1}^{\rank M}p_j^\alpha
 =\|M\|_{2\alpha}^{2\alpha},\\
 z_{k\ell}
 &=\sum_{j=1}^{2^{g(d)}}s_j(M_{k\ell})^{2\alpha}
 =(\pi^{\op}_{k\ell})^{\alpha}2^{(1-\alpha)g(d)}.
 \end{aligned}
 \label{supp-eq:moments}
\end{equation}
By definition, the R\'enyi entropy in the present range is
$S_\alpha^\op(t)=(1-\alpha)^{-1}\log Z_\alpha$.
There are $N_{\rm blocks}=t(t+1)$ blocks. Since
$M_{k\ell}=P_k^{\rm L}MP_\ell^{\rm R}$ and
$P_k^{\rm L},P_\ell^{\rm R}$ are orthogonal projectors, left and right
multiplication by them cannot increase any singular
value~\cite{supp-BhatiaMatrixAnalysis}. Thus $s_j(M_{k\ell})\le s_j(M)$ for
every $j$, and hence $z_{k\ell}\le Z_\alpha$ for every $\alpha>0$.

For the upper bound, set $p=2\alpha$ and regard every $M_{k\ell}$ as a
matrix on the full left and right Hilbert spaces. If $p\ge1$, the triangle
inequality followed by H\"older's inequality gives
\begin{equation*}
 \|M\|_p^p
 \le\left(\sum_{k,\ell}\|M_{k\ell}\|_p\right)^p
 \le N_{\rm blocks}^{p-1}
 \sum_{k,\ell}\|M_{k\ell}\|_p^p.
\end{equation*}
If $0<p<1$, the $p$-triangle inequality for the Schatten quasi-norm
power~\cite{supp-McCarthyCp1967} gives directly
\begin{equation*}
 \|M\|_p^p\le\sum_{k,\ell}\|M_{k\ell}\|_p^p.
\end{equation*}
Since $\|M_{k\ell}\|_p^p=z_{k\ell}$, the two cases combine into
\begin{equation}
 \max_{k,\ell}z_{k\ell}\le Z_\alpha
 \le N_{\rm blocks}^{\max(2\alpha-1,0)}\sum_{k,\ell}z_{k\ell}.
 \label{supp-eq:moment-bounds}
\end{equation}
Since the sum has only $N_{\rm blocks}$ terms, Eq.~\eqref{supp-eq:moment-bounds}
implies
\begin{equation*}
 \log Z_\alpha=\max_{k,\ell}\log z_{k\ell}+O(\log t).
\end{equation*}
By Eq.~\eqref{supp-eq:moments},
\begin{equation*}
 \log z_{k\ell}=\alpha\log\pi^{\op}_{k\ell}
 +(1-\alpha)g(\ell-k)\log2.
\end{equation*}
The second term rewards a large count imbalance, which produces many
Schmidt coefficients within the branch, whereas the first penalizes the
small probability of such an atypical branch. We therefore first determine
the largest branch probability at fixed extensive imbalance
$d=\ell-k=\eta t+O(1)$, and then optimize over $\eta$.

For the first step, write $k=(t-1)(1+u)/2$ and
$\ell=t(1+v)/2$, with $u,v\in[-1,1]$. At fixed $d$, these variables
obey the constraint $tv-(t-1)u=2d-1$. Stirling's formula then gives
\begin{equation}
 \log\pi^{\op}_{k\ell}=-(t-1)J(u)-tJ(v)+O(\log t),\qquad
 J(u)=\tfrac12\bigl[(1+u)\log(1+u)+(1-u)\log(1-u)\bigr],
 \label{supp-eq:stirling}
\end{equation}
where $0\log0=0$. Since $J$ is even and convex, the cost is minimized at
$-u=v=(2d-1)/(2t-1)$. Thus the least costly way to produce the
prescribed imbalance uses equal and opposite relative count deviations
in the two packets. The discreteness of $k$ and $\ell$ only affects the
$O(\log t)$ correction. Hence, for an admissible imbalance with
$|d|=\eta t+O(1)$,
\begin{equation}
 \max_{\ell-k=d}\log\pi^{\op}_{k\ell}
 =-tI(\eta)+O(\log t),\qquad
 I(\eta)=2J(\eta),\qquad 0\le\eta\le1.
 \label{supp-eq:rate}
\end{equation}
The finite source offset affects only the remainder. Since
$g(d)=|d|+O(1)$, substituting Eq.~\eqref{supp-eq:rate} into
Eq.~\eqref{supp-eq:moment-bounds} yields
\begin{equation}
 \log Z_\alpha
 =t\max_{0\le\eta\le1}
 \bigl[(1-\alpha)\eta\log2-\alpha I(\eta)\bigr]+O(\log t).
 \label{supp-eq:variational}
\end{equation}
The derivative $I'(\eta)=2\operatorname{artanh}\eta$ fixes the unique
maximum at $\eta_\alpha=\tanh x_\alpha$, where
$x_\alpha=(1-\alpha)\log2/(2\alpha)$. Using
$I(\tanh x)=2x\tanh x-2\log\cosh x$ and dividing by $1-\alpha$, we find
\begin{equation}
 S_\alpha^\op(t)=v_\alpha t+O(\log t),\qquad
 v_\alpha=\frac{2\alpha}{1-\alpha}\log\cosh x_\alpha.
 \label{supp-eq:velocity}
\end{equation}
This determines the low-index growth rate explicitly. Moreover,
$\lim_{\alpha\to0^+}v_\alpha=\log2$, in agreement with the growth of
$S_0^\op(t)$ established independently by Eq.~\eqref{supp-eq:rank-bounds}.

\subsection{R\'enyi indices \texorpdfstring{$\alpha>1$}{alpha > 1}}
\label{supp-subsec:high}

For $\alpha>1$, the largest Schmidt probability controls the R\'enyi
entropy from both sides:
\begin{equation*}
 -\log p_{\max}\le S_\alpha^\op(t)
 \le\frac{\alpha}{\alpha-1}(-\log p_{\max}).
\end{equation*}
The lower bound follows from R\'enyi monotonicity, while the upper bound
follows from $\sum_jp_j^\alpha\ge p_{\max}^\alpha$. It is therefore
sufficient to determine the scale of $p_{\max}=\|M\|_{\op}^2$, where
\begin{equation*}
 \|A\|_{\op}:=\sup_{\|v\|=1}\|Av\|=s_{\max}(A)
\end{equation*}
is the operator norm, equivalently the Schatten-$\infty$ norm.
For the factorization below, we write the two marginal binomial weights as
\begin{equation*}
 q_k=\Prob(K=k)=2^{-(t-1)}\binom{t-1}{k},\qquad
 r_\ell=\Prob(L=\ell)=2^{-t}\binom{t}{\ell},
\end{equation*}
so that $\pi^{\op}_{k\ell}=q_kr_\ell$.
The maxima of these marginal weights follow from the central-binomial
estimate
\begin{equation}
 \max_{0\le j\le n}2^{-n}\binom{n}{j}
 =\sqrt{\frac{2}{\pi n}}\left[1+O(n^{-1})\right],
 \qquad n\to\infty.
 \label{supp-eq:max-binomial-prob}
\end{equation}

We first inspect the count blocks separately. Their flat spectra in
Eq.~\eqref{supp-eq:block-spectrum} give
\begin{equation}
 \max_{k,\ell}\|M_{k\ell}\|_{\op}^2
 =\max_{k,\ell}q_kr_\ell\,2^{-g(\ell-k)}
 =(\max_k q_k)(\max_\ell r_\ell)
 \sim\frac{2}{\pi t}.
 \label{supp-eq:blockwise-pmax}
\end{equation}
For the second equality, $g(d)\ge0$ gives the upper bound by the product
of the two marginal maxima. This bound is attained at
$k=\ell=\lfloor t/2\rfloor$, where both marginal probabilities are
maximal and $g(0)=0$. The final asymptotic relation follows from
Eq.~\eqref{supp-eq:max-binomial-prob}, applied with $n=t-1$ for $q_k$ and
$n=t$ for $r_\ell$. Projecting $M$ onto a count block cannot
increase its operator norm, so Eq.~\eqref{supp-eq:blockwise-pmax} already proves
\begin{equation*}
 p_{\max}\ge\max_{k,\ell}\|M_{k\ell}\|_{\op}^2=\Theta(t^{-1}).
\end{equation*}
If the blocks were mutually orthogonal on both Schmidt sides, the
blockwise estimate would be the full answer. Blocks sharing a row or
column count can, however, combine coherently. We must therefore show
that this assembly does not produce a parametrically larger singular
value.

To bound $\|M\|_{\op}$ from above, we replace each count block by its
operator norm and form the nonnegative scalar matrix
\begin{equation}
 N_{k\ell}=\|M_{k\ell}\|_{\op}
 =\sqrt{q_k}\,2^{-g(\ell-k)/2}\sqrt{r_\ell}.
 \label{supp-eq:majorant}
\end{equation}
For a block vector $v=(v_\ell)$, the triangle inequality gives
$\|(Mv)_k\|\le\sum_\ell N_{k\ell}\|v_\ell\|$.
Taking the Euclidean norm over $k$ proves
$\|M\|_{\op}\le\|N\|_{\op}$.
Factor $N=\Lambda_K^{1/2}G\Lambda_L^{1/2}$, where
$\Lambda_K=\operatorname{diag}(q_k)$,
$\Lambda_L=\operatorname{diag}(r_\ell)$, and
$G_{k\ell}=2^{-g(\ell-k)/2}$. Every row and column sum of $G$ is bounded by
\begin{equation}
 C=\sum_{d\in\mathbb Z}2^{-g(d)/2}
 =\frac{2}{1-2^{-1/2}}.
 \label{supp-eq:kernel-sum}
\end{equation}
For any vector $x$, applying Cauchy--Schwarz to each row and then using
the column-sum bound gives $\|Gx\|^2\le C^2\|x\|^2$, and hence
$\|G\|_{\op}\le C$. Therefore
\begin{equation}
 p_{\max}\le C^2\bigl(\max_k q_k\bigr)
 \bigl(\max_\ell r_\ell\bigr)=O(t^{-1}).
 \label{supp-eq:pmax-upper}
\end{equation}
The product of marginal maxima in the last step was evaluated in
Eq.~\eqref{supp-eq:blockwise-pmax}.
Combined with the blockwise lower bound above, this proves
$p_{\max}=\Theta(t^{-1})$: nearly balanced branches determine the scale
of the largest Schmidt probability even after coherent assembly.
Substituting this result into the bounds at the beginning of this
subsection gives
\begin{equation}
 \log t+O(1)\le S_\alpha^\op(t)
 \le\frac{\alpha}{\alpha-1}\log t+O(1),\qquad 1<\alpha<\infty.
 \label{supp-eq:high-bounds}
\end{equation}
At $\alpha=\infty$, the entropy equals $-\log p_{\max}=\log t+O(1)$.
This proves the logarithmic regime, including its endpoint. The bounds
do not fix the logarithmic coefficient for finite $\alpha>1$.

\subsection{Domain-wall state}
\label{supp-subsec:state}

The same argument applies to the state problem with the weights
$\pi^{\dw}_{k\ell}=4^{-t}\binom tk\binom t\ell$ from
Eq.~\eqref{main-eq:dw-branches} of the main text, now with
$0\le k,\ell\le t$, and with $g(d)$ replaced by $|k-\ell|$.
Let $M^{\dw}$ be the normalized coefficient matrix of the domain-wall
state in orthonormal bases adapted to its output counts, analogous to
$M$ in Sec.~\ref{supp-subsec:blocks}. The extreme branch and the sum of
block ranks give
\begin{equation}
 2^t\le\rank M^{\dw}
 \le\sum_{k,\ell=0}^t2^{|k-\ell|}=2^{t+3}-3t-7,
 \label{supp-eq:state-rank}
\end{equation}
so $S_0^{\dw}=t\log2+O(1)$. For the low-index calculation, the
constraint on relative deviations becomes $(u-v)/2=\pm\eta$.
Since $J$ is even and convex, the minimum cost is again $2J(\eta)$,
giving the same rate $v_\alpha$. For high indices, the kernel
$2^{-|k-\ell|/2}$ has bounded row and column sums, and a balanced block
$k=\ell$ has rank one and weight $\Theta(t^{-1})$.
Thus the same upper and lower bounds on $p_{\max}$, and hence the same
high-index entropy bounds, follow. Together with the von Neumann result
proved in Sec.~\ref{supp-subsec:binomial-vn}, this
establishes the stated hierarchy for the domain-wall state.

\section{Lower bound on the bond dimension for MPO simulations}
\label{supp-sec:mpo}

The entanglement entropies alone do not always determine whether the
time-evolved operator can be simulated efficiently with tensor-network
methods. We therefore ask how large an MPO bond dimension is necessary to
retain a fixed fraction of the operator's Hilbert--Schmidt weight. We prove that for
every fixed error $0<\varepsilon<1$, the required bond dimension is at
least $\exp(c\sqrt t)$ for some $c=c(\varepsilon)>0$ and all sufficiently
large $t$. The proof has two steps. In
Sec.~\ref{supp-subsec:retained-weight} we use the flat spectra of the count
branches to bound the weight retained by any finite-rank approximation.
In Sec.~\ref{supp-subsec:fixed-accuracy} we combine this bound with the
$\sqrt t$ fluctuations of the count imbalance.

\subsection{A necessary-rank bound from the count branches}
\label{supp-subsec:retained-weight}

Order the full operator-Schmidt probabilities as
$p_1(t)\ge p_2(t)\ge\cdots$ and define the weight retained by the largest
$\chi$ probabilities as $F_\chi(t)=\sum_{j\le\chi}p_j(t)$. Fix
$0<\varepsilon<1$, where $\varepsilon$ is the relative squared
Hilbert--Schmidt error. For the normalized coefficient matrix $M$
introduced in Sec.~\ref{supp-subsec:blocks}, Schmidt truncation gives
\begin{equation}
 \min_{\rank X\le\chi}\|M-X\|_{\HS}^2=1-F_\chi(t),\qquad
 \chi_\varepsilon(t)=\min\{\chi:F_\chi(t)\ge1-\varepsilon\}.
 \label{supp-eq:approximation-error}
\end{equation}
Across the chosen cut, every MPO of bond dimension $\chi$ has a
coefficient matrix of rank at most $\chi$. A lower bound on
$\chi_\varepsilon(t)$ therefore gives the same necessary lower bound on
the bond dimension of every MPO approximation at this accuracy.

For the estimate below, we use the corresponding variational
characterization of the retained weight, furnished by the
Eckart--Young theorem~\cite{supp-EckartYoung1936}:
\begin{equation}
 F_\chi(t)=
 \max_{\substack{\rank Y\le\chi\\\|Y\|_{\HS}=1}}
 \left|\operatorname{tr}(Y^\dagger M)\right|^2.
 \label{supp-eq:weight-variational}
\end{equation}
Let $K\sim\Bin(t-1,1/2)$ and $L\sim\Bin(t,1/2)$ be independent, as in
Eq.~\eqref{supp-eq:count-weights}. Their joint probability
$\Prob(K=k,L=\ell)=\pi^{\op}_{k\ell}$ is the weight of the count branch
$(k,\ell)$. For a fixed branch $(k,\ell)$, the flat Schmidt rank is
$2^{g(\ell-k)}$, so its base-two logarithmic rank is $g(\ell-k)$. We
collect these branch-dependent values into the random variable
$G_t=g(L-K)$.

\begin{unnumberedlemma}[Retained-weight bound]
For every integer $\chi\ge1$, the retained weight satisfies
\begin{equation}
 F_\chi(t)\le\sum_{k=0}^{t-1}\sum_{\ell=0}^{t}
 \pi^{\op}_{k\ell}\min\{1,\chi2^{-g(\ell-k)}\}
 =\E\min\{1,\chi2^{-G_t}\}.
 \label{supp-eq:necessary-rank-bound}
\end{equation}
\end{unnumberedlemma}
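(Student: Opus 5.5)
We plan to start from the variational characterization \eqref{supp-eq:weight-variational}. Fix any $Y$ with $\rank Y\le\chi$ and $\|Y\|_{\HS}=1$. Using the block decomposition $M=\sum_{k,\ell}M_{k\ell}$ with $M_{k\ell}=P_k^{\rm L}MP_\ell^{\rm R}$, we write
\begin{equation*}
 \operatorname{tr}(Y^\dagger M)=\sum_{k,\ell}\operatorname{tr}(Y_{k\ell}^\dagger M_{k\ell}),\qquad Y_{k\ell}=P_k^{\rm L}YP_\ell^{\rm R}.
\end{equation*}
This holds because the projectors are orthogonal and resolve the identity on the support of $M$. The blocks of $Y$ satisfy $\sum_{k,\ell}\|Y_{k\ell}\|_{\HS}^2\le1$, and each obeys $\rank Y_{k\ell}\le\chi$, since compressing by projectors cannot increase rank.

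Next we bound each block overlap using the flat spectrum \eqref{supp-eq:block-spectrum}. The singular values of $M_{k\ell}$ all equal $\sigma_{k\ell}=(\pi^{\op}_{k\ell}2^{-g(d)})^{1/2}$, and there are $r_{k\ell}=2^{g(d)}$ of them. By von Neumann's trace inequality, $|\operatorname{tr}(Y_{k\ell}^\dagger M_{k\ell})|\le\sum_j s_j(Y_{k\ell})s_j(M_{k\ell})$. At most $m=\min\{\chi,r_{k\ell}\}$ terms are nonzero, so Cauchy--Schwarz gives
\begin{equation*}
 |\operatorname{tr}(Y_{k\ell}^\dagger M_{k\ell})|\le\sigma_{k\ell}\sqrt m\,\|Y_{k\ell}\|_{\HS}=\sqrt{\pi^{\op}_{k\ell}\min\{1,\chi2^{-g(d)}\}}\;\|Y_{k\ell}\|_{\HS}.
\end{equation*}
This is the key step. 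It is where flatness is used: the best rank-$\chi$ overlap with a flat block of rank $r$ captures only the fraction $\min\{1,\chi/r\}$ of its weight.

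Finally, we sum over blocks and apply Cauchy--Schwarz over the index pair $(k,\ell)$:
\begin{equation*}
 |\operatorname{tr}(Y^\dagger M)|^2\le\Bigl(\sum_{k,\ell}\|Y_{k\ell}\|_{\HS}^2\Bigr)\Bigl(\sum_{k,\ell}\pi^{\op}_{k\ell}\min\{1,\chi2^{-g(\ell-k)}\}\Bigr).
\end{equation*}
The first factor is at most one. Maximizing over $Y$ then gives \eqref{supp-eq:necessary-rank-bound}. The equality with $\E\min\{1,\chi2^{-G_t}\}$ follows directly from the definitions of $\pi^{\op}_{k\ell}$ and $G_t$.

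The main obstacle is the coherent assembly of blocks that share a row or column count. A naive approach would try to bound the top $\chi$ singular values of $M$ blockwise, and that fails because blocks are not orthogonal on both sides. The variational formula sidesteps this: the trace pairing splits exactly over blocks, and only the blockwise norms of $Y$ enter, with $\sum\|Y_{k\ell}\|_{\HS}^2\le\|Y\|_{\HS}^2$. The one point to check is that inequality. It holds because the $P_k^{\rm L}\otimes P_\ell^{\rm R}$ are mutually orthogonal projectors acting on $Y$ viewed as a vector, even though $Y$ itself need not be block structured.
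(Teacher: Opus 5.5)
Your proposal is correct and follows the paper's proof essentially step for step. Both arguments use the variational (Eckart--Young) characterization and split $\operatorname{tr}(Y^\dagger M)$ over the count blocks. Both then bound each block overlap using the flat spectrum and finish with Cauchy--Schwarz, using $\sum_{k,\ell}\|Y_{k\ell}\|_{\HS}^2\le1$. Your use of von Neumann's trace inequality spells out the paper's one-line claim that a block of rank at most $\chi$ captures at most a fraction $\min\{1,\chi2^{-g(\ell-k)}\}$ of the flat branch weight.
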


\begin{proof}
Take a normalized trial matrix $Y$ of rank at most
$\chi$ in Eq.~\eqref{supp-eq:weight-variational}. Obtain its count block
$Y_{k\ell}$ using the same row and column projections as for
$M_{k\ell}$. This block also has rank at most $\chi$, so it can capture
at most a fraction $\min\{1,\chi2^{-g(\ell-k)}\}$ of the flat branch
spectrum. Therefore
\begin{equation}
 \left|\operatorname{tr}(Y_{k\ell}^\dagger M_{k\ell})\right|
 \le\|Y_{k\ell}\|_{\HS}
 \sqrt{\pi^{\op}_{k\ell}\min\{1,\chi2^{-g(\ell-k)}\}}.
 \label{supp-eq:block-overlap}
\end{equation}
The block decomposition gives
\begin{equation*}
 \operatorname{tr}(Y^\dagger M)
 =\sum_{k,\ell}\operatorname{tr}(Y_{k\ell}^\dagger M_{k\ell}).
\end{equation*}
The count blocks are orthogonal in the Hilbert--Schmidt inner product,
and $\sum_{k,\ell}\|Y_{k\ell}\|_{\HS}^2\le1$. Summing
Eq.~\eqref{supp-eq:block-overlap} and applying the Cauchy--Schwarz inequality
proves
Eq.~\eqref{supp-eq:necessary-rank-bound}.
\end{proof}

\subsection{Fixed-accuracy lower bound}
\label{supp-subsec:fixed-accuracy}

The physical origin of the large necessary rank is the typical count
imbalance. Since $L-K+t-1\sim\Bin(2t-1,1/2)$, the central limit theorem
implies that $G_t/\sqrt t$ approaches a continuous half-Gaussian
distribution. Thus a typical branch carries a number of Schmidt bits of
order $\sqrt t$ and has rank exponential in $\sqrt t$. The continuity of
the limiting distribution at the origin means that, for every fixed
$0<\varepsilon<1$, we can choose $c_\varepsilon>0$ small enough that the
limiting probability of
$G_t\le 2c_\varepsilon\sqrt t/\log2$ is strictly smaller than
$1-\varepsilon$.

We now set $\chi=\lfloor\exp(c_\varepsilon\sqrt t)\rfloor$ in
Eq.~\eqref{supp-eq:necessary-rank-bound} and separate the branches according to
the size of their imbalance $G_t$. The branches satisfying
$G_t\le 2c_\varepsilon\sqrt t/\log2$ contribute at most their total
probability, because the factor inside the average in
Eq.~\eqref{supp-eq:necessary-rank-bound} is bounded by one. For every remaining
branch,
\begin{equation*}
 \chi2^{-G_t}
 \le \exp(c_\varepsilon\sqrt t)\exp(-2c_\varepsilon\sqrt t)
 =\exp(-c_\varepsilon\sqrt t).
\end{equation*}
Since the total probability of the remaining branches is at most one,
Eq.~\eqref{supp-eq:necessary-rank-bound} gives, for all sufficiently large $t$,
\begin{equation}
 F_{\lfloor\exp(c_\varepsilon\sqrt t)\rfloor}(t)
 \le \Prob\!\left(
 G_t\le\frac{2c_\varepsilon}{\log2}\sqrt t\right)
 +\exp(-c_\varepsilon\sqrt t)
 <1-\varepsilon.
 \label{supp-eq:fixed-rank-failure}
\end{equation}
This rank therefore fails the accuracy condition in
Eq.~\eqref{supp-eq:approximation-error}, which proves
\begin{equation}
 \chi_\varepsilon(t)\ge\exp(c_\varepsilon\sqrt t)
 \label{supp-eq:fixed-error}
\end{equation}
for all sufficiently large $t$.

Equation~\eqref{supp-eq:necessary-rank-bound} also implies that every
approximation whose rank grows polynomially has asymptotically vanishing
overlap. Indeed, for $\chi(t)=O(t^p)$ with fixed $p$, splitting the bound at
$G_t=t^{1/4}$ makes both terms vanish: the low-imbalance probability
vanishes by the same central-limit argument, while
$\chi(t)2^{-t^{1/4}}\to0$. Hence every polynomial rank retains an
asymptotically vanishing fraction of the normalized Hilbert--Schmidt
weight.

\section{Conjectured scaling for a reflection-invariant circuit}
\label{supp-sec:reflection}

We now consider whether the same entanglement-growth laws can occur
with spatial reflection symmetry. For this purpose, we introduce a new
circuit with local dimension $D=8$ and specify its gate. The local space is
$H_{\rm sec}\otimes H_{\rm col}$, with $H_{\rm sec}\simeq\mathbb C^2$ and
$H_{\rm col}\simeq\mathbb C^4$. The sector labels are $A,B$, while the color
basis is labeled by the elements of the field $\mathbb F_4$. Let
$\omega\in\mathbb F_4$ satisfy $\omega^2=\omega+1$.
The gate is defined as follows. It acts as the identity when the sector
labels coincide. For different sector labels, its classical update rule is
\begin{equation}
 \begin{split}
  r(A,a;B,b)&=(B,a+\omega b;A,\omega a+\omega b),\\
  r(B,a;A,b)&=(A,\omega a+\omega b;B,\omega a+b),
 \end{split}
 \label{supp-eq:reflection-gate}
\end{equation}
where $a,b\in\mathbb F_4$ and all arithmetic is in this field.
The gate is involutive and satisfies the braid relation.

Based on preliminary computations, we conjecture that this
reflection-invariant circuit exhibits the same asymptotic
entanglement-growth scaling forms as the $D=4$ model, with only the
prefactors differing.

%
\end{document}